\documentclass[11pt]{article}
\usepackage{paralist}
\usepackage{color}
\usepackage{bm}

\usepackage{soul}

\usepackage[cm]{fullpage}
\usepackage{algpseudocode,algorithm}
\usepackage{amsfonts}
\usepackage{amssymb}

\usepackage{amsmath}
\usepackage{constants}
\usepackage{amsthm}
\makeatletter
\newtheorem*{rep@theorem}{\rep@title}
\newcommand{\newreptheorem}[2]{%
	\newenvironment{rep#1}[1]{%
		\def\rep@title{#2 \ref{##1}}%
		\begin{rep@theorem}}%
		{\end{rep@theorem}}}
\makeatother


%



\newcommand{\junk}[1]{}


\usepackage{ifpdf}
\newcommand{\mydriver}{hypertex}
\ifpdf
\renewcommand{\mydriver}{pdftex}
\fi
\usepackage[breaklinks,\mydriver]{hyperref}


\usepackage[margin=1in]{geometry}

\theoremstyle{plain}
\newtheorem{theorem}{Theorem}[section]

\newtheorem{lemma}[theorem]{Lemma}

\newtheorem{claim}[theorem]{Claim}
\newtheorem{definition}[theorem]{Definition}

\theoremstyle{definition}

\newcommand{\II}{\ensuremath{\mathcal{I}}}

\newcommand{\pr}{{\ensuremath{\psi}}}
\newcommand{\appr}{\ensuremath{\Psi}}

\newcommand{\FF}{\ensuremath{\mathcal{F}}}
\newcommand{\disc}{\ensuremath{\mathrm{disc}}}
\newcommand{\Var}{\textrm{Var}}
\newcommand{\E}{\textrm{E}}
\newcommand{\OPT}{\textrm{OPT}}
\newcommand{\Property}{\ensuremath{{\Pi}}}
\newcommand{\Distri}{\ensuremath{\mathcal{D}}}
\newcommand{\Good}{\ensuremath{\mathcal{G}}}
\newcommand{\Perm}{\ensuremath{\Sigma}}

\newcommand{\poly}{\ensuremath{\mathrm{poly}}}
\newcommand{\dt}{\ensuremath{\mathrm{dt}}}
\newcommand{\Oracle}{\ensuremath{\mathcal{O}}}
\newcommand{\estim}{\ensuremath{\Psi}}
\newcommand{\Uniform}{\ensuremath{\mathcal{U}}}

\providecommand{\abs}[1]{\lvert#1\rvert} 

\title{Testable Bounded Degree Graph Properties Are Random Order Streamable}
\author{
	Morteza Monemizadeh\footnote{Department of Computer Science, Goethe-Universit\"{a}t Frankfurt, Germany. Partially supported by DFG grants ME 2088/3-(1/2) and ME 2088/4-1. Email: \url{monemi@ae.cs.uni-frankfurt.de}. 
	}
	\and
	S. Muthukrishnan\footnote{Rutgers University, Piscataway, NJ, USA. Email: \url{muthu@cs.rutgers.edu}.}
	\and
	Pan Peng\footnote{Faculty of Computer Science, University of Vienna, Austria. The research leading to these results has received funding from the European Research Council under the European Union's Seventh
		Framework Programme (FP/2007-2013) / ERC Grant Agreement
		no. 340506. Email: \url{pan.peng@univie.ac.at}. }
	\and
	Christian Sohler\footnote{Department of Computer Science, TU Dortmund, Germany. Supported by ERC Starting Grant 307696. Email: \url{christian.sohler@tu-dortmund.de}.}
}


\date{}


\begin{document}
\begin{titlepage}

\maketitle
\thispagestyle{empty}
\begin{abstract}
We study which property testing and sublinear time algorithms can be transformed into graph streaming algorithms for random order streams.
Our main result is that for bounded degree graphs, any property that is constant-query testable in the adjacency list model can be tested with 
\emph{constant space} in a single-pass in random order streams. Our result is obtained by estimating the distribution
of local neighborhoods of the vertices on a random order graph stream using constant space.

We then show that our approach can also be applied to constant time approximation algorithms for bounded degree graphs 
in the adjacency list model: As an example, we obtain a constant-space single-pass random order streaming algorithms for 
approximating the size of a maximum matching with additive error $\epsilon n$ ($n$ is the number of nodes). 

Our result establishes for the first time that a large class of sublinear algorithms can be simulated in random order streams, 
while $\Omega(n)$ space is needed for many graph streaming problems for adversarial orders. 
 \end{abstract}

\end{titlepage}

\section{Introduction}

Very large and complex networks abound. Some of the  prominent examples are 
gene regulatory networks, health/disease networks, and online social networks like Facebook, 
Google+, Linkedin and Twitter. 
The interconnectivity
of neurons in human brain, relations in database systems, and chip designs are some further examples.
Some of these networks can be quite large and it may be hard to store them completely in the main memory
and some may be too large to be stored at all. However, these networks contain valuable information
that we want to reveal. For example, social networks can provide insights into the structure of 
our society, and the structure in gene regulatory networks might yield insights into diseases. Thus, we need algorithms that can analyze the structure of these networks quickly.

One way to approach this problem is to design graph streaming algorithms~\cite{HRR99:stream, AMS96:space}. A graph streaming algorithm
gets access to a stream of edges in some order and exactly or approximately solves problems on the graph defined
by the stream. The challenge is that a graph streaming algorithm should use space sublinear in the
size of the graph. 
We will focus on algorithms that make only \emph{one pass} over the graph stream, unless we explicitly say otherwise.
It has been shown that many natural graph problems require $\Omega(n)$ space in the \emph{adversarial order} model 
where $n$ is the number of nodes in the graph and the edges can arrive in arbitrary order~(see eg.,\cite{FKMSZ05:graph,FKM08:distance}), 
and thus most of previous work has focused on the \emph{semi-streaming} model, in which the algorithms are allowed to use $O(n\cdot \poly\log n)$ space. 
However, in many interesting applications, the graphs are sparse and so they can
be fully stored in the semi-streaming model making this model useless in this setting. This raises
the question \emph{whether there are at least some natural conditions under which one can solve graph 
	problems with space $o(n)$, possibly even $\log^{O(1)} n$ or constant}.

One such condition that recently received increasing attention is that the edges arrive in \emph{random order}, i.e. in the
order of a uniformly random permutation of the edges (e.g., \cite{CCM08:lower,KMM12:matching,KKS14:matching}). Uniformly random or near-uniformly random ordering is a natural assumption and can arise in many contexts. Indeed, previous work has shown that some problems that are hard for adversarial streams can be solved in the random order model.
Konrad et al.~\cite{KMM12:matching} gave single-pass semi-streaming algorithms for maximum matching for bipartite and general graphs with approximation ratio strictly larger than $1/2$ in the random order semi-streaming model, while no such approximation algorithm is known in the adversary order model. Kapralov et al.~\cite{KKS14:matching} gave a polylogarithmic approximation algorithm in polylogarithmic space for estimating the size of maximum matching of an unweighted graph in one pass over a random order stream. Assadi et al.~\cite{AKL17:estimating} recently showed that in the adversarial order and \emph{dynamic} model where edges can be both inserted and deleted, any polylogarithmic approximation algorithm of maximum matching size requires $\tilde{\Omega}(n)$ space. On the other hand, Chakrabarti et al.~\cite{CCM08:lower} presented an $\Omega(n)$ space lower bound for any single pass algorithm 
for graph connectivity in the random order streaming model, which is very close to the optimal $\Omega(n\log n)$ space lower bound in the adversarial order model~\cite{SW15:tight}. In general, it is unclear which graph problems can be solved in random order streams using much smaller space than what is required for adversarially ordered streams.

An independent area of research is \emph{property testing}, where with certain \emph{query} access to an object (eg., random vertices or neighbors of a vertex for graphs), there are algorithms that can determine if the object satisfies a certain property, or is far from having such a property~\cite{RS96:robust,DBLP:journals/jacm/GoldreichGR98,GR02:testing}.
The area of property testing has seen fundamental results, including testing various general graph properties. For example, it has been shown that many interesting properties (including connectivity, planarity, minor-freeness, hyperfiniteness) of bounded degree graphs can be tested with a constant number of queries~\cite{GR02:testing,BSS10:minor,NS13:hyperfinite}. Another very related area of research is called \emph{constant-time} (or in general, \emph{sublinear-time}) \emph{approximation} algorithms, where we are given query access to an object (for example a graph) and the goal is to approximate the objective value of an
optimal solution. For example, in bounded degree graphs, one can approximate the cost of the optimal solution with constant query complexity for some
fundamental optimization problems (e.g., minimum spanning tree weight~\cite{DBLP:journals/siamcomp/ChazelleRT05}, maximal matching size~\cite{NO08:constant}; see also Section~\ref{sec:related_work}). 

A fundamental question is if such results from property testing and constant-time approximation algorithms will lead to better graph streaming algorithms. Huang and Peng~\cite{HP16:stream} recently considered the problem of estimating the minimum spanning tree weight and property testing for general graphs in dynamic and adversarial order model. They showed that a number of properties (e.g., connectivity, cycle-freeness) of general $n$-vertex graphs can be tested with space complexity $O(n^{1-\varepsilon})$ and one can $(1+\varepsilon)$-approximate the weight of minimum spanning tree with similar space guarantee. Furthermore, there exist $\Omega(n^{1-O(\varepsilon)})$ space lower bounds for these problems that hold even in the insertion-only model~\cite{HP16:stream}. 

\subsection{Overview of Results}

In this paper we provide a general framework that transforms bounded-degree graph property testing to very space-efficient random order streaming algorithms. 

%


To formally state our main result, we first review some basic definitions of graph property testing. A \emph{graph property} is a property that is invariant under graph isomorphism. Let $G=(V,E)$ be a graph with maximum degree upper bounded by a constant $d$, and we also call $G$ a \emph{$d$-bounded} graph. In the \emph{adjacency list model} for (bounded-degree) graph property testing, we are given query access to the adjacency list of the input $d$-bounded graph $G=(V,E)$. That is, for any vertex $v\in V$ and index $i\leq d$, one can query the $i$th neighbor (if exists) of vertex $v$ in constant time. Given a property $\Property$, we are interested in testing if a graph $G$ satisfies $\Property$ or is \emph{$\varepsilon$-far} from satisfying $\Property$ while making as few queries as possible, where $G$ is said to be $\varepsilon$-far from satisfying $\Property$ if one has to insert/delete more than $\varepsilon dn$ edges to make it satisfy $\Property$. We call a property \emph{constant-query testable} if there exists a testing algorithm (also called \emph{tester}) for this property such that the number of performed queries depends only on parameters $\varepsilon,d$ and is independent of the size of the input graph. 

Given a graph property $\Property$, we are interested in \emph{approximately} testing it in a single-pass stream with a goal similar to the above. That is, the algorithm uses little space and with high constant probability, it accepts the input graph $G$ if it satisfies $P$ and rejects $G$ if it is $\varepsilon$-far from satisfying $P$ (see Section~\ref{sec:property_testing} for formal definitions). Our main result is as follows.

\begin{theorem}\label{thm:const_query_testable_property}
	Any $d$-bounded graph property that is constant-query testable in the adjacency list model can be tested in the uniformly random order streaming model with constant space. 
\end{theorem}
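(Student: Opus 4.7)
My starting point is a canonical-tester reduction for the bounded-degree model: up to a $\poly(d,1/\varepsilon)$ blow-up, every constant-query tester may be assumed to sample $s=s(d,\varepsilon)$ vertices uniformly at random, inspect the $r$-disc around each for a constant $r=r(d,\varepsilon)$, and output its verdict based solely on the multiset of observed rooted $r$-disc isomorphism types. Since the number $C=C(d,r)$ of such types is constant, Theorem~\ref{thm:const_query_testable_property} reduces to producing $s$ approximately i.i.d.\ samples, in total-variation distance $\varepsilon'$, from the distribution $\Distri_G$ over types induced by picking a uniformly random vertex of $G$; a Chernoff bound shows that any $\varepsilon'$-close distribution suffices to fool the canonical tester with high constant probability.

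\textbf{Step 2: A constant-space $r$-disc sampler.} I pick targets $v_1,\ldots,v_s$ uniformly from $[n]$ before the stream starts and run, in parallel per target $v$, a lazy BFS of depth $r$: maintain a discovered set $D_v$ together with tentative distances, initialized to $\{v\}$; whenever an arriving edge $(x,y)$ has $x\in D_v$ at tentative distance strictly less than $r$, add $y$ at distance $d(x)+1$ and store the edge; always store edges both of whose endpoints already lie in $D_v$, and discard all other edges. Since the graph is $d$-bounded, $|D_v|\le d^r$ and at most $d\cdot d^r$ edges are ever stored per sample, so the total space is $O(s\cdot d^{r+1})=O(1)$ words. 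At the end of the stream, we read off the isomorphism type of the induced subgraph on $D_v$, rooted at $v$.

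\textbf{Main obstacle and how to overcome it.} The delicate issue -- and what I expect to be the technical heart of the proof -- is that an edge $(u,w)$ with $u$ in the true $r$-disc may arrive \emph{before} $u$ has been added to $D_v$, in which case it is lost forever and the reconstruction is incomplete. The random-order hypothesis is exactly what controls this loss: for each vertex $u$ the relative order of the $\le d$ edges incident to $u$ is uniform, so with probability at least $1/d$ the first such edge seen leads from a vertex closer to $v$ in the disc; a union bound over the $O(d^r)$ vertices of the disc yields a constant lower bound $q(d,r)>0$ on the probability that the lazy BFS reconstructs the true $r$-disc exactly. The remaining challenge is that conditioning on ``successful reconstruction'' skews the distribution of recovered types, because the success probability depends mildly on the type being recovered (and on the disc's immediate external boundary). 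I would debias this by (i) expanding the BFS to depth $r+1$ so that each sample also reveals the relevant boundary structure, (ii) detecting failed reconstructions from the final state of $D_v$ (each discovered vertex at depth $<r$ must have its full $d$-bounded degree accounted for), and (iii) importance-reweighting every successful sample by the inverse of its type-specific success probability, which is a quantity computable from the observed extended disc alone. Oversampling by a constant $O(1/q(d,r))$ factor then delivers $s$ essentially unbiased samples from $\Distri_G$, which combined with Step~1 proves Theorem~\ref{thm:const_query_testable_property}.
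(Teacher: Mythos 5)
Your Steps 1--2 match the paper's setup (canonical tester plus a lazy streaming BFS that collects an \emph{observed} $k$-disc per sampled vertex), but the step you flag as the technical heart contains a genuine gap: item (ii), detecting failed reconstructions from the final state of $D_v$, is not possible in one pass. The graph is only $d$-\emph{bounded}, not $d$-regular, so a discovered vertex at depth $<r$ whose observed degree is, say, $2$ is indistinguishable from a vertex of true degree $3$ one of whose edges arrived before the vertex entered $D_v$; no amount of extra depth or boundary information reveals an edge that was never stored. (Verifying that the observed disc equals the true disc is exactly what the paper's \emph{two-pass} algorithm uses its second pass for.) Consequently your importance-reweighting scheme, which conditions on ``successful reconstruction,'' cannot be implemented: the samples whose observed disc has type $\Delta_i$ are a mixture of vertices whose true disc is $\Delta_i$ and vertices whose true disc is any supertype $\Delta_j\succcurlyeq\Delta_i$ that was only partially seen, and you have no way to separate these populations sample by sample. (A smaller slip: the ``union bound over $O(d^r)$ vertices, each succeeding with probability $1/d$'' does not yield a positive success probability; the correct statement is a crude lower bound such as $\lambda_{\min}\ge 1/(2d^{k+1})!$ from the probability that the disc's edges arrive in a favorable order.)

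The paper's resolution is precisely to \emph{not} detect failures but to deconvolve the bias you identified. Because the relative order of the edges of a fixed disc is uniform, the conditional probabilities $\lambda(\Delta_i\mid\Delta_j)$ of observing type $\Delta_i$ when the true type is $\Delta_j\succcurlyeq\Delta_i$ are fixed computable constants, and the expected observed frequencies satisfy a triangular linear system over the partial order of disc types: $\E[Y_i]=\sum_{\Delta_j\succcurlyeq\Delta_i} \frac{|A_j|}{s}\,\lambda(\Delta_i\mid\Delta_j)$. Processing types from supertypes downward, the estimator $X_i=\bigl(Y_i-\sum_{j\in\Good(i)}X_j\,\lambda(\Delta_i\mid\Delta_j)\bigr)\lambda^{-1}(\Delta_i\mid\Delta_i)$ is unbiased for the sampled fraction of each true type, and a variance/error-propagation analysis (using that discs of far-apart sampled vertices are independent, so each indicator correlates with at most $d^{2k+1}$ others, and choosing the sample size so the geometrically growing error $\theta_i$ stays below $\delta$) gives additive-$\delta$ estimates of all $f_i$ with constant space. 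These estimates are then fed into the hypergeometric formula for the canonical tester's acceptance probability. So your overall architecture is right, but without this partial-order deconvolution (or some substitute for unverifiable reconstructions) the single-pass argument does not go through.
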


To the best of our knowledge, this is the first non-trivial graph streaming algorithm with constant space complexity (measured in the number of \emph{words}, where a word is a space unit large enough to encode an ID of any vertex in the graph.) By the constructions in~\cite{HP16:stream}, there exist graph properties (e.g., connectivity and cycle-freeness) of $d$-bounded graphs such that any single-pass streaming algorithm in the insertion-only and \emph{adversary order} model must use $\Omega(n^{1-O(\varepsilon)})$ space. In contrast to this lower bound, our main result implies that $d$-bounded connectivity and cycle-freeness can be tested in constant space in the random order stream model, since they are constant-query testable in the adjacency list model~\cite{GR02:testing}. 

Our approach also works for simulating \emph{constant-time approximation} algorithms 
as graph streaming algorithms with constant space. For a minimization (resp., maximization) optimization problem $P$ and an instance $I$, we let $\OPT(I)$ denote the value of some optimal solution of $I$. We call a value $x$ an $(\alpha,\beta)$-approximation for the problem $P$, if for any instance $I$, it holds that $\OPT(I)\leq x\leq \alpha\cdot \OPT(I) +\beta$ (resp., $\frac{\OPT(I)}{\alpha} -\beta\leq x\leq \OPT(I)$).
For example, it is known that there exists a constant-query algorithm for $(1,\varepsilon n)$-approximating the maximal matching size of any $n$-vertex $d$-bounded graph~\cite{NO08:constant}. That is, the number of queries made by the algorithm is independent of $n$ and only depends on $\varepsilon,d$. As an application, we show: 

\begin{theorem}\label{thm:constant_time}
	Let $0<\varepsilon <1$ and $d$ be constants. Then there exists an algorithm that uses constant space in the random order model, 
	and with probability $2/3$, $(1,\varepsilon n)$-approximates the size of some maximal matching in $d$-bounded graphs.
\end{theorem}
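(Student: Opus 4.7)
The plan is to reduce to the constant-query oracle of Nguyen and Onak~\cite{NO08:constant} and feed it samples produced by the local-neighborhood distribution estimator that underlies Theorem~\ref{thm:const_query_testable_property}. Recall that Nguyen and Onak give, for any constants $\varepsilon,d$, a randomized procedure that samples $O(1)$ vertices uniformly at random, inspects only a constant-radius ball around each (a constant number of adjacency-list queries per sampled vertex), and outputs a value that is a $(1,\varepsilon n/2)$-approximation to the size of some maximal matching with probability at least $2/3$. The crucial structural point is that the output is a function only of the multiset of isomorphism types of the constant-radius neighborhoods of the sampled vertices, together with the algorithm's internal randomness.

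Given this, my plan is as follows. First, I invoke the engine used to prove Theorem~\ref{thm:const_query_testable_property}: in a single random-order pass using constant space, we can maintain an estimator $\widehat{\Distri}$ of the true distribution $\Distri$ of $r$-hop neighborhood types of a uniformly random vertex of $G$, where $r=r(\varepsilon,d)$ is the radius required by the Nguyen--Onak procedure. By tuning the precision parameter of that estimator, we can make the total-variation distance $\|\widehat{\Distri}-\Distri\|_{\mathrm{TV}} \le \varepsilon'$ for any constant $\varepsilon'=\varepsilon'(\varepsilon,d)>0$ of our choice, still in constant space.

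Second, I simulate the Nguyen--Onak oracle offline on samples drawn from $\widehat{\Distri}$ rather than on $G$ itself: for each of the $O(1)$ vertex samples the oracle needs, I draw an independent $r$-ball from $\widehat{\Distri}$ and answer all of the oracle's adjacency-list queries inside that ball. Because the oracle's queries stay within the $r$-ball with probability $1$, its output on these synthetic samples is, by a standard coupling, within total-variation distance $O(\varepsilon')$ of its output on true samples from $G$. Rescaling the average number of matched sampled vertices by $n$ (which is given in the streaming model, or can be read off from the stream with additive error absorbed into $\varepsilon n$) produces the final estimate. Choosing $\varepsilon'$ small enough so that the coupling error contributes at most $\varepsilon n/2$ additive slack, and combining with the $(1,\varepsilon n/2)$-approximation guarantee of the oracle, yields the claimed $(1,\varepsilon n)$-approximation with probability at least $2/3$.

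The main obstacle is justifying that substituting true local neighborhoods by independent samples from $\widehat{\Distri}$ preserves the distribution of the oracle's output to within the desired total-variation error, despite the fact that in $G$ the $r$-balls of two sampled vertices are slightly correlated. This is handled exactly as in the proof of Theorem~\ref{thm:const_query_testable_property}: for $d$-bounded graphs the $r$-balls of $O(1)$ independently sampled vertices are disjoint with probability $1-o(1)$ (they overlap only when the sampled vertices lie within distance $2r$, and the fraction of such pairs is $O(d^{2r}/n)=o(1)$), so the joint distribution of their neighborhood types is $o(1)$-close in total variation to the product distribution $\Distri^{\otimes k}$, which in turn is $O(\varepsilon')$-close to $\widehat{\Distri}^{\otimes k}$. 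With this decoupling in place, the error analysis is routine.
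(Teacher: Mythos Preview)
Your proposal is correct and follows essentially the same route as the paper: invoke the Nguyen--Onak oracle (Lemma~\ref{lemma:onak_oracle}), observe that its output depends only on the $q$-disc types of the sampled vertices together with the algorithm's own randomness, estimate the $q$-disc distribution via Algorithm~\ref{alg:approx_k_disc}, simulate the oracle on the estimated distribution, and use disjointness of the sampled $q$-discs (which holds with probability $1-o(1)$) to decouple. The paper's sketch is terser but makes exactly these moves.
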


We also remark that in a similar way, many other sublinear time algorithms for bounded degree graphs can be simulated in random order streams. 
Finally, our results can actually be extended to a model which requires weaker assumptions on the randomness of the order of edges in the stream, but we describe our results for the uniformly random order model, and leave the remaining details for later. 

\subsection{Technical Overview}

The local neighborhood of depth $k$ of a vertex $v$ is the subgraph rooted at $v$ and induced by all vertices of distance at most $k$ from $v$.
We call such a rooted subgraph a {\em $k$-disc}.
Suppose that we are given a sufficiently large graph $G$ whose maximum degree $d$ is constant. This means that for any constant $k$, a $k$-disc centered at an arbitrary vertex $v$ in $G$ has constant size. 
Now assume that there exists an algorithm $\mathcal{A}$ that, independent of the labeling of the vertices of $G$, accesses $G$ by querying random vertices and exploring their $k$-discs. We observe that any constant-query property tester (see for example \cite{GR11:proximity,CPS16:testing}) falls within the framework of such an algorithm. If instead of the graph $G$ we are given the distribution of $k$-discs of the vertices of $G$, we can use this distribution to simulate the algorithm $\mathcal{A}$ and output with high probability the same result as executing the algorithm $\mathcal{A}$ on $G$ itself.  
Thus, the problem of developing constant-query property testers in random order streams can be reduced to the problem of designing  streaming algorithms that approximate the distribution of $k$-discs in $G$.

The main technical contribution of this 
paper is an algorithm that given a random order stream $S$ of edges of an underlying $d$-bounded degree graph $G$, approximates the distribution of $k$-discs of $G$  up to an additive error of $\delta$. We would like to mention that if the edges arrive in adversarial order,  any algorithm
that approximates the distribution of $k$-discs of $G$ requires almost linear space \cite{VY11:streaming,HP16:stream}, 
hence the assumption of random order streams (or something similar) is necessary to obtain our result.

Now in order to approximate the distribution of $k$-discs of the graph $G$ we do the following. We proceed by sampling vertices uniformly at random and then perform a BFS for each sampled vertex using the arrival of edges along the stream $S$. Note that the new edges of the stream $S$ that do not connect to the currently explored vertices are discarded. Let us call the $k$-disc that is observed by doing such a BFS from some vertex $v$ to be $\Delta_1$. Due to possibility of missing edges during the BFS, this subgraph may be different from the true $k$-disc $\Delta_2$ rooted at $v$.

If we are allowed to use two passes of the stream, then one can collect the $k$-disc of $v$ in the first pass, and then verify if the collected disc is the true $k$-disc of $v$ in the second pass (see Section~\ref{sec:two_pass}). 
However, if we are restricted to a single pass, then it is more challenging to detect or verify if some edges have been missed in a collected disc. Fortunately, since the edges arrive in a uniformly random order, we can infer the conditional probability $\Pr[\Delta_1|\Delta_2]$. That is, given the true rooted subgraph $\Delta_2$, we can compute the conditional probability of seeing a rooted subgraph $\Delta_1$ in a random order stream when the true $k$-disc is $\Delta_2$.

We define the partial order on the set of $k$-discs given by $\Delta_1\preccurlyeq \Delta_2$ whenever $\Delta_1$
is a root-preserving isomorphic subgraph of $\Delta_2$. 
For every two $k$-discs $\Delta_1$ and $\Delta_2$ with $\Delta_1\preccurlyeq \Delta_2$ we compute  the conditional probability $\Pr[\Delta_1|\Delta_2]$. 
Using the set of all conditional probabilities $\Pr[\Delta_1|\Delta_2]$ we
can estimate or approximate the distribution of $k$-discs of the graph $G$ whose edges are revealed according to the stream $S$. 
In order to simplify the analysis of our algorithm, we require a natural independence condition for non-intersecting $k$-discs. Finally, we use the approximated distribution of $k$-discs to simulate the algorithm $\mathcal{A}$ by the machinery that we explained above. 

We remark that the idea of using a partial order to compute a distribution of
$k$-discs in bounded degree graphs has first been used in \cite{CPS16:testing}. However, the setting
in \cite{CPS16:testing} was quite different as it dealt with directed graphs where an edge can only be seen
from one side (and the sample sizes required in that paper were only slightly sublinear in $n$).

\subsection{Other Related Work}\label{sec:related_work}
Feigenbaum et al.~\cite{FKSV02:testing} initiated the study of property testing in streaming model, and they gave efficient testers for some properties of a sequence of data items (rather than graphs as we consider here). Bury and Schwiegelshohn~\cite{BS15:sublinear} gave a lower bound of $n^{1-O(\varepsilon)}$ on the space complexity of any algorithm that $(1-\varepsilon)$-approximates the size of maximum matching in adversarial streams. Kapralov et al.~\cite{KKS15:streaming} showed that in random streams, $\tilde{\Omega}(\sqrt{n})$ space is necessary to distinguish if a graph is bipartite or $1/2$-far from being bipartite. Previous work has extensively studied streaming graph algorithms in both the insertion-only and dynamic models, see the recent survey~\cite{McG14:stream}. 

In the framework of $d$-bounded graph property testing, it is now known that many interesting properties are constant-query testable in the adjacency list model, including $k$-edge connectivity, cycle-freeness, subgraph-freeness~\cite{GR02:testing}, $k$-vertex connectivity~\cite{YI08:vertex}, minor-freeness~\cite{HKNO09:local,BSS10:minor}, matroids related properties~\cite{ITY12:matroid,TY15:supermodular}, hyperfinite properties~\cite{NS13:hyperfinite}, subdivision-freeness~\cite{KY13:subdivision}. Constant-time approximation algorithms in $d$-bounded graphs are known to exist for a number of fundamental optimization problems, including $(1+\varepsilon)$-approximating the weight of minimum spanning tree~\cite{DBLP:journals/siamcomp/ChazelleRT05}, $(1,\varepsilon n)$-approximating the size of maximal/maximum matching~\cite{NO08:constant,YYI12:constant}, $(2,\varepsilon n)$-approximating the minimum vertex cover size~\cite{PR07:sublinear_distributed,MR09:distance,ORRR12:vetexcover}, $(O(\log d), \varepsilon n)$-approximating the minimum dominating set size~\cite{PR07:sublinear_distributed,NO08:constant}. For $d$-bounded minor-free graphs, there are constant-time $(1,\varepsilon n)$-approximation algorithms for the size of minimum vertex cover, minimum dominating set and maximum independent set~\cite{HKNO09:local}.

\section{Preliminaries}\label{sec:preliminaries}
Let $G=(V,E)$ be an $n$-vertex graph with maximum degree upper bounded by some constant $d$, where we often identify $V$ as $[n]:=\{1,\cdots,n\}$. We also call such a graph $d$-bounded graph. In this paper, we will assume the algorithms have the knowledge of $n,d$. We assume that $G$ is represented as a sequence of edges, which we denote as \textsc{Stream}$(G)$.

\subparagraph*{Graph $k$-discs.}
\vspace{-10pt}
Let $k\geq 1$. The $k$-disc around a vertex $v$ is the subgraph rooted at vertex $v$ and induced by the vertices within distance at most $k$ from $v$. Note that for an $n$-vertex graph, there are exactly $n$ $k$-discs. Let $\mathcal{H}_{d,k}=\{\Delta_1,\cdots,\Delta_N\}$ be the set of all $k$-disc isomorphism types, where $N=N_{d,k}$ is the number of all such types (and is thus a constant). In the following, we will refer to a $k$-disc of some vertex $v$ in the graph $G$ as $\disc_{k,G}(v)$ and a $k$-disc type as $\Delta$. Note that for every vertex $v$, there exists a unique $k$-disc type $\Delta\in\mathcal{H}_{d,k}$ such that $\disc_{k,G}(v)$ is isomorphic to $\Delta$, denoted as $\disc_{k,G}(v)\cong \Delta$. {(Throughout the paper, we call two rooted graphs $H_1,H_2$ isomorphic to each other if there is a root-preserving mapping from the vertex set of $H_1$ to the vertex set of $H_2$.)}

We further assume that all the elements in $\mathcal{H}_{d,k}$ are ordered according to the natural partial order among $k$-disc types. More specifically, for any two $k$-disc types $\Delta_i,\Delta_j$, we let $\Delta_i\succcurlyeq\Delta_j$ (or equivalently, $\Delta_j\preccurlyeq\Delta_i$) denote that $\Delta_j$ is root-preserving isomorphic to some subgraph of $\Delta_i$. Then we order all the $k$-disc types 
$\Delta_1,\cdots,\Delta_{N}$ such that if $\Delta_i\succcurlyeq\Delta_j$, then $i\leq j$. Let $\Good(j)$ denote all the indices $i$, except $j$ itself, such that $\Delta_i\succcurlyeq\Delta_j$.

\subparagraph*{Locally random order streams.}
\vspace{-10pt}
Let $\Perm_E$ denote the set of all permutations (or orderings) over the edge set $E$. Note that each $\sigma \in\Perm_E$ determines the order of edges arriving from the stream. Let $\Distri=\mathcal{D}(\Perm_E)$ denote a probability distribution over $\Sigma_E$. In particular, we let $\mathcal{U}=\mathcal{U}(\Sigma_E)$ denote the uniform distribution over $\Sigma_E$. 
%
%
Given a stream $\sigma$ of edges, 
we define the \emph{observed $k$-disc of $v$ from the stream}, denoted as $\disc_k(v,\sigma)$, to be the subgraph rooted at $v$ and induced by all edges that are sequentially collected from the stream and the endpoints of which are within distance at most $k$ to $v$. This is formally defined in the following algorithm \textsc{Stream\_$k$-disc}.

\begin{algorithm}[H]
	\caption{The observed $k$-disc of $v$ from the stream}~\label{alg:stream}
	\begin{algorithmic}[1]
		\Procedure{Stream\_$k$-disc}{\textsc{Stream}$(G)$,$k$,$v$}	
		\State $U\gets \{v\}$, $\ell_v=0$, $F \gets \emptyset$ 
		\For{$(u,w)\gets$ next edge in the stream}
		\If{exactly one of $u,w$, say $u$, is contained in $U$}
		\If{$\ell_u\leq k-1$}
		\State{$U\gets U\cup\{w\}, F\gets F\cup\{(u,w)\}$}
		\For{$x\in U$}
		\State{$\ell_x\gets$ the distance between $x$ and $v$ in the graph $G'=(U,F)$}
		\EndFor
		\EndIf
		\ElsIf{both $u,v$ are contained in $U$}
		\State{$F\gets F\cup\{(u,w)\}$}
		\For{$x\in U$}
		\State{$\ell_x\gets$ the distance between $x$ and $v$ in the graph $G'=(U,F)$}
		\EndFor
		\EndIf
		\EndFor
		\State \Return $\disc_k(v,\sigma)\gets$ the subgraph rooted at $v$ and induced by all edges in $F$
		\EndProcedure
	\end{algorithmic}
\end{algorithm}

Now we formally define a locally random distribution on the order of edges.
\begin{definition}~\label{def:local_distribution}
	Let $d,k>0$. Let $G=(V,E)$ be a $d$-bounded graph. Let $\Distri$ be a distribution over all the orderings of edges in $E$. Let $\Lambda_k=\{\lambda(\Delta_i|\Delta_j): 0\leq \lambda(\Delta_i|\Delta_j)\leq 1, \Delta_j\succcurlyeq \Delta_i, 1\leq i,j\leq N\}$ be a set of real numbers in $[0,1]$. We call $\Distri$ a \emph{locally random $\Lambda_k$-distribution} over $G$ with respect to $k$-disc types, if for $\sigma$ sampled from $\Distri$, the following conditions are satisfied:
	\begin{enumerate}
		\item (\emph{Conditional probabilities}) For any vertex $v$ with $k$-disc isomorphic to $\Delta_j$, the probability that its observed $k$-disc $\disc_{k}(v,\sigma)\cong \Delta_i$ is $\lambda(\Delta_i|\Delta_j)$, for any $i$ such that $\Delta_j\succcurlyeq\Delta_i$.  
		\item\label{def:local_item_2} (\emph{Independence of disjoint $k$-discs}) For any two disjoint $k$-discs $\disc_{k,G}(v)$ and $\disc_{k,G}(u)$, their observed $k$-discs $\disc_{k}(v,\sigma)$ and $\disc_{k}(u,\sigma)$ are independent.
	\end{enumerate}
\end{definition}

Note that the set $\Lambda_k$ cannot be an arbitrary set, as there might be no distribution satisfying the above condition. 
On the other hand, if there indeed exists a distribution satisfying the condition with numbers in $\Lambda_k$, then we call the set $\Lambda_k$ \emph{realizable}. In the following, we call a stream a \emph{locally random order stream} if there exists a family of realizable sets $\Lambda=\{\Lambda_k\}_{k\geq 1}$, such that the edge order is sampled from some locally random $\Lambda_k$-distribution with respect to $k$-disc types, for any integer $k\geq1$. We have the following lemma.
\begin{lemma}~\label{lemma:uniformly_random}
	Let $d\geq 1$. For any $k\geq 1$, there exists $n_0=n_0(k,d)$, such that for $n\geq n_0$, any $d$-bounded $n$-vertex graph $G=(V,E)$, the uniform permutation $\Uniform$ over $E$ is a locally random $\Lambda_k$-distribution over $G$ with respect to $k$-disc types, for some realizable $\Lambda_k:=\{\lambda(\Delta_i|\Delta_j): 0\leq \lambda(\Delta_i|\Delta_j)\leq 1, \Delta_j\succcurlyeq \Delta_i, 1\leq i,j\leq N\}$. Furthermore, if we let $\kappa:=\max_{i,j:\Delta_j\succcurlyeq\Delta_i}\frac{\lambda(\Delta_i|\Delta_j)}{\lambda(\Delta_i|\Delta_i)}$, $\lambda_{\min}:=\min_{i\leq N}\lambda(\Delta_i|\Delta_i)$, then $\kappa\leq 2^{2d^{k+1}}$, $\lambda_{\min}\geq \frac{1}{(2d^{k+1})!}$.
\end{lemma}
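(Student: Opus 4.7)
The plan is to verify both conditions of Definition~\ref{def:local_distribution} for the uniform permutation $\mathcal{U}$ and then establish the numerical bounds on $\lambda_{\min}$ and $\kappa$. Throughout, fix a vertex $v$ and write $T_v := \disc_{k,G}(v)$ and $L_v := E(T_v)$. The first step is a structural observation: during the execution of \textsc{Stream\_$k$-disc}$(v)$, every vertex ever added to $U$ lies in $V(T_v)$ and every edge ever added to $F$ lies in $L_v$. This follows by induction on the main loop, since a new vertex $w$ enters $U$ only through an edge $(u,w)$ with $u\in U$ and $\ell_u\le k-1$; as the observed depth $\ell_u$ upper bounds the true graph distance from $v$ to $u$, the graph distance from $v$ to $w$ is at most $k$. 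Consequently, $\disc_k(v,\sigma)$ is a deterministic function of the relative order of the edges of $L_v$ in $\sigma$: any edge of $E\setminus L_v$ encountered by the algorithm is either ignored (neither endpoint in $U$) or silently dropped (its unique endpoint in $U$ has depth exactly $k$). Item~(1) of Definition~\ref{def:local_distribution} then follows because a uniform permutation of $E$ induces a uniform permutation on $L_v$, and any root-preserving isomorphism between two true $k$-discs transports the execution of \textsc{Stream\_$k$-disc} step by step, so $\Pr[\disc_k(v,\sigma)\cong\Delta_i\mid T_v\cong\Delta_j]$ depends only on $\Delta_i,\Delta_j$; this defines a realizable $\Lambda_k$. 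Item~(2) follows because two vertex-disjoint discs $T_v,T_u$ have disjoint edge sets, and the relative orders of disjoint subsets inside a uniform permutation of $E$ are independent.

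For the lower bound on $\lambda_{\min}$, I would exhibit one canonical BFS-from-$v$ ordering of $L_i$ that forces \textsc{Stream\_$k$-disc} to recover all of $\Delta_i$: process edges in non-decreasing order of the depth of the closer endpoint, and schedule edges between two depth-$k$ vertices only after both endpoints have been added via shallower edges. Such an ordering exists for every $\Delta_i$, and since $|L_i|\le 2d^{k+1}$ (using the degree bound $d$ and $|V(T_v)|\le 2d^k$), the probability that a uniform permutation of $L_i$ matches this particular ordering is at least $1/|L_i|!\ge 1/(2d^{k+1})!$.

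For the upper bound on $\kappa$, I would decompose
\[
\lambda(\Delta_i|\Delta_j)\;=\;\sum_{S}\Pr\bigl[\disc_k(v,\sigma)=S\bigr],
\]
where $S$ ranges over the edge subsets of $L_j$ whose rooted induced subgraph is isomorphic to $\Delta_i$. A coupling argument bounds each summand by $\lambda(\Delta_i|\Delta_i)$: whenever the full run on $L_j$ ends with $F=S$, every processed edge of $L_j\setminus S$ must have left $(U,F)$ unchanged (otherwise it would be permanently added to $F$, contradicting $F=S$ at termination), so the restricted execution that consumes only the $S$-edges of $\sigma$ in their induced order evolves identically to the full run at the corresponding checkpoints and also outputs $S$; since $\sigma|_S$ is uniformly distributed over permutations of $S$ and $S\cong\Delta_i$, this matched probability is exactly $\lambda(\Delta_i|\Delta_i)$. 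Summing over at most $\binom{|L_j|}{|L_i|}\le 2^{|L_j|}\le 2^{2d^{k+1}}$ choices of $S$ gives the claimed bound. The main obstacle will be the coupling step: one must carry out a careful case analysis of the algorithm's branches to check that under the hypothesis $F=S$ at termination, every edge of $L_j\setminus S$ processed mid-execution falls into one of the two branches that leave $(U,F)$ intact (namely ``neither endpoint in $U$'' or ``exactly one endpoint in $U$ with depth equal to $k$''), ruling out the branches that would have committed the edge to $F$.
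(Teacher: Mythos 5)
Your proof is correct and takes essentially the same route as the paper's: defining each $\lambda(\Delta_i|\Delta_j)$ via the fraction of orderings, deriving Item~2 from the edge-disjointness of the two discs, lower-bounding $\lambda_{\min}$ by exhibiting a single witnessing ordering of the at most $2d^{k+1}$ edges, and upper-bounding $\kappa$ by a union bound over the at most $2^{2d^{k+1}}$ copies of $\Delta_i$ inside $\Delta_j$. The only difference is that you supply in full the structural and coupling justifications (that the observed $k$-disc is a function of $\sigma|_{E(T_v)}$ only, and that $\Pr[\disc_k(v,\sigma)=S]\le\lambda(\Delta_i|\Delta_i)$ for each fixed $S\cong\Delta_i$) which the paper states without proof.
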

\begin{proof}
	Note that for any vertex $v$ with $\disc_{k,G}(v)\cong\Delta_j$, the probability that the observed $k$-disc of $v$ is isomorphic to $\Delta_i$ is exactly the fraction of orderings $\sigma$ such that $\disc_k(v,\sigma)\cong \Delta_i$, where $\Delta_j\succcurlyeq\Delta_i$. We use such a fraction, which is a fixed real number, to define $\lambda(\Delta_i|\Delta_j)$. Observe that for an ordering $\sigma$ sampled from $\Uniform$, it directly satisfies the second condition Item~\ref{def:local_item_2} in Definition~\ref{def:local_distribution}. Since there are at most $2d^{k+1}$ edges in any $k$-disc, the probability of observing a full $k$-disc is at least $\frac{1}{(2d^{k+1})!}$, that is, $\lambda_{\min}\geq \frac{1}{(2d^{k+1})!}$. Furthermore, since the $k$-disc type $\Delta_j$ might contain at most $\binom{|E(\Delta_j)|}{|E(\Delta_i)|}\leq 2^{2d^{k+1}}$ different subgraphs that are isomorphic to $\Delta_i$, it holds that $\lambda(\Delta_i|\Delta_j)\leq \sum_{\substack{F:F\textrm{ subgraph of }\Delta_j\\ F\cong\Delta_i}}\lambda(\Delta_i|\Delta_i)\leq 2^{2d^{k+1}}\lambda(\Delta_i|\Delta_i)$ for any $i,j$ such that $\Delta_j\succcurlyeq\Delta_i$. This completes the proof of the lemma.
\end{proof}
The above lemma shows that the uniformly random order stream is a special case of a locally random order stream. 
Another natural class of locally random order stream is $\ell$-wise independent permutation of edges for any $\ell=\omega_n(1)$ (i.e., any function that tends to infinity as $n$ goes to infinity) for $n$-vertex bounded degree graphs, but for our qualitative purposes here, it suffices to consider uniformly random order streams. 


\section{Approximating the $k$-Disc Type Distribution}\label{sec:approx_kdisc}
In this section, we show how to approximate the distribution of $k$-disc types of any $d$-bounded graph in locally random order streams. 

Recall that for any $k,d$, we let $N=N_{d,k}$ be the constant denoting the number of all possible $k$-disc isomorphism types. 
For any $i\leq N$, let $V_i$ be the set of vertices from $V$ with $k$-disc isomorphic to $\Delta_i$ in the input graph $G$, that is, $V_i:=\{v|v\in V,\disc_{k,G}(v)\cong \Delta_i\}$. Note that $f_i=\frac{|V_i|}{n}$ is the fraction of vertices with $k$-disc isomorphic to $\Delta_i$. 

\subsection{A Two-Pass Algorithm}\label{sec:two_pass}
We start with a discussion of a two-pass algorithm for approximating the distribution of $k$-disc types. 
The main idea is that in the first pass we can collect or observe the $k$-disc from any vertex $u$, and then in the second pass, we check if the observed $k$-disc is the true $k$-disc of $u$ or not. We can then use the statistics of the observed true $k$-discs to estimate the distribution of $k$-disc types. 

Slightly more formally, we first sample a large constant number of vertices and let $S$ denote the set of sampled vertices. Then in the first pass, for each vertex $u\in S$, we invoke the algorithm \textsc{Stream\_$k$-disc} to collect the observed $k$-disc of $u$, denoted as $H_u$, from the stream. In the second pass, for each vertex $w\in V(H_u)$, we collect all the incident edges to $w$. Then we let $H'_u$ denote the subgraph spanned by all edges (collected in the second pass) incident to vertices within distance at most $k$ to $u$. We check if $H_u$ is isomorphic to $H'_u$. It is not hard to see that the true $k$-disc of $u$ is observed if and only if $H_u$ is isomorphic to $H'_u$. For each $k$-disc type $\Delta_i$, we could then use the fraction of vertices $v$ in $S$ such that the true $k$-disc of $v$ is observed and is isomorphic to $\Delta_i$, to define an estimator for $f_i$. One should note that the naive estimator needs to be normalized appropriately by some probabilities and that there are dependencies between different variables, if one samples more than one starting vertex. Similar technical challenges also appear in our single-pass algorithm, for which we give detailed analysis in the following section. We omit further discussion on the two-pass algorithm here.

\subsection{A Single-Pass Algorithm}
In the following, we present our single-pass algorithm for approximating the distribution of $k$-disc types. We have the following lemma.
\begin{lemma}~\label{lemma:approx_k_disc_type}
	Let $G=(V,E)$ be a $d$-bounded graph presented in a locally random order stream defined by a $\Lambda_k$-distribution $\Distri$ over $G$ with respect to $k$-disc types, for some integer $k$. Let $\kappa:=\max_{i,j:\Delta_j\succcurlyeq\Delta_i}\frac{\lambda(\Delta_i|\Delta_j)}{\lambda(\Delta_i|\Delta_i)}$, $\lambda_{\min}:=\min_{i\leq N}\lambda(\Delta_i|\Delta_i)$. Then for any constant $\delta>0$, there exists a single-pass streaming algorithm that uses $O(\frac{\kappa^{2N}\cdot d^{3k+2}\cdot3^{3N+1}}{\delta^2\lambda_{\min}})$ space, and with probability $\frac{2}{3}$,  for any $i\leq N$, approximates the fraction $f_i$ of vertices with $k$-disc isomorphic to $\Delta_i$ in $G$ with additive error $\delta$.
\end{lemma}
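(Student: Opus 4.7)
\medskip
\noindent\textbf{Proof plan.} My plan is to view the lemma as a two-stage problem stitched together by the partial order on $k$-disc types. In the first stage I will estimate, for every $i\leq N$, the quantity
\[
g_i \;:=\; \sum_{j:\,\Delta_j\succcurlyeq \Delta_i} f_j\,\lambda(\Delta_i\,|\,\Delta_j)
\;=\;\Pr_{v\sim V}\!\bigl[\disc_k(v,\sigma)\cong \Delta_i\bigr],
\]
where the equality is immediate from the conditional-probability clause of Definition \ref{def:local_distribution}. In the second stage I will recover $\hat f_i$ from $\hat g_1,\dots,\hat g_i$ by back-substitution in the topological order fixed in Section \ref{sec:preliminaries}, using the fact that $\lambda(\Delta_i|\Delta_i)\geq \lambda_{\min}>0$ makes the system triangular and invertible.

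For the first stage, at the start of the stream I draw $s$ vertices $v_1,\dots,v_s$ from $[n]$ independently and uniformly (we know $n$, so this costs $O(s)$ words), and in parallel run $s$ copies of the procedure \textsc{Stream\_$k$-disc}$(\cdot,k,v_j)$. After the stream I set
\[
\hat g_i \;=\; \frac{1}{s}\sum_{j=1}^{s}\mathbf{1}\!\bigl[\disc_k(v_j,\sigma)\cong \Delta_i\bigr].
\]
Unbiasedness $\E[\hat g_i]=g_i$ is immediate. For the variance I exploit the independence-of-disjoint-$k$-discs clause of Definition \ref{def:local_distribution}: pairs of samples whose $k$-discs are vertex-disjoint in $G$ contribute zero covariance, and since $G$ is $d$-bounded the probability that two independent uniform vertices lie within graph distance $2k$ is at most $d^{2k+1}/n$. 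This yields $\Var[\hat g_i]\leq \tfrac{1}{s}+\tfrac{s\,d^{2k+1}}{n}$, so Chebyshev's inequality plus a union bound over the $N$ types gives $|\hat g_i-g_i|\leq \epsilon$ for all $i$ simultaneously with probability $\geq 2/3$, provided $s=\Theta(N\,d^{2k+1}/\epsilon^2)$ and $n$ is larger than some $n_0(d,k)$.

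For the second stage, I process $i=1,2,\dots,N$ in the topological order and set
\[
\hat f_i \;=\; \frac{1}{\lambda(\Delta_i|\Delta_i)}\Bigl(\hat g_i\;-\sum_{j\in\Good(i)}\lambda(\Delta_i|\Delta_j)\,\hat f_j\Bigr).
\]
Letting $e_i:=|\hat f_i-f_i|$ and subtracting the exact identity $g_i=\sum_{j:\Delta_j\succcurlyeq\Delta_i}\lambda(\Delta_i|\Delta_j)f_j$, the definition of $\kappa$ yields the recursion $e_i\leq \epsilon/\lambda_{\min}+\kappa\sum_{j\in\Good(i)} e_j$. Since $|\Good(i)|<N$, induction on $i$ gives $\max_i e_i\leq (\epsilon/\lambda_{\min})\cdot(1+\kappa N)^N$, which is at most $\delta$ once we choose $\epsilon$ of order $\delta\lambda_{\min}\,\kappa^{-O(N)}\,3^{-O(N)}$. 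Plugging this into $s$ and multiplying by the $O(d^{k+1})$ words needed to store each parallel BFS delivers the claimed space bound $O(\kappa^{2N}\,d^{3k+2}\,3^{3N+1}/(\delta^2\lambda_{\min}))$.

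The main obstacle I foresee is the multiplicative error blow-up in the second stage: a naive induction up the Hasse diagram costs a factor of roughly $\kappa$ per type and up to $\kappa^N$ overall, so to match the precise exponents stated in the lemma---and in particular to get the advertised $\lambda_{\min}^{-1}$ rather than $\lambda_{\min}^{-2}$ from combining $\epsilon\propto\lambda_{\min}$ with $s\propto 1/\epsilon^2$---one must be careful to keep the inductive blow-up tight and possibly replace Chebyshev by a Bernstein-type bound exploiting $\Var[X_j]\leq g_i$ (which can be small for individual types) to shave a factor of $\lambda_{\min}$. The other two ingredients---designing the streaming estimator and running the variance calculation---are essentially forced by the two clauses of the definition of locally random streams and should be routine.
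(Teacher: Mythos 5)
Your proposal is correct and follows essentially the same approach as the paper: sample a constant number of centers, collect observed $k$-discs via \textsc{Stream\_$k$-disc}, and invert the triangular system $g_i=\sum_{j:\Delta_j\succcurlyeq\Delta_i}\lambda(\Delta_i|\Delta_j)f_j$ by back-substitution in topological order (the paper's $Y_i, X_i$ are exactly your $\hat g_i, \hat f_i$), with a variance bound driven by the independence-of-disjoint-$k$-discs clause. Your self-identified concern about recovering $\lambda_{\min}^{-1}$ rather than $\lambda_{\min}^{-2}$ is precisely the crux the paper handles: Claim~\ref{claim:variance} bounds $\Var[Y_i^{(\sigma)}]\lesssim d^{2k+1}\E[Y_i^{(\sigma)}]/s$, so that dividing by $\lambda(\Delta_i|\Delta_i)^2$ in Chebyshev costs only one power of $\lambda_{\min}$; the paper propagates this through the partial order with geometrically growing per-index error parameters $\theta_i=(3\kappa)^{i-N-1}$ and failure probabilities $\beta_i=3^{i-N-2}$, which is the probabilistic analogue of your deterministic recursion $e_i\le \epsilon/\lambda_{\min}+\kappa\sum_{j\in\Good(i)}e_j$.
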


\begin{proof}
	Our algorithm is as follows. We first sample a constant number of vertices, which are called centers. Then for each center $v$, we  collect the observed $k$-disc of $v$ from the stream. 
	Then we postprocess all the collected edges and use the corresponding empirical distribution of $k$-disc types of all centers to estimate the distribution of $k$-disc types of the input graph. The formal description is given in Algorithm~\ref{alg:approx_k_disc}.
	
	\begin{algorithm}[h]
		\caption{Approximating the distribution of $k$-disc types}~\label{alg:approx_k_disc}
		\begin{algorithmic}[1]
			\Procedure{$k$-disc\_distribution}{\textsc{Stream}$(G)$,$\Lambda_k$,$n,d,k,\delta$}	
			\State sample a set $A$ of $s:=\frac{8\kappa^{2N}\cdot d^{2k+1}\cdot3^{3N+1}}{\delta^2\lambda_{\min}}$ vertices uniformly at random 
			\For{each $v\in A$}
			\State $H_v\gets$ \textsc{Stream\_$k$-disc}(\textsc{Stream}($G$),$v$,$k$) \Comment{to collect observed $k$-disc of $v$}
			\EndFor
			\EndProcedure
			\State 
			\Procedure{Postprocessing}{}
			\State $H\gets$ the graph spanned by $\cup_{v\in A}H_v$
			\For{$i=1$ to $N$}
			\State $Y_i\gets |\{v:v\in A, \disc_{k,H}(v)\cong\Delta_i\}|/s$
			\State $X_i\gets (Y_i-\sum_{j\in \mathcal{G}(i)}X_j\cdot\lambda(\Delta_i|\Delta_j))\cdot\lambda^{-1}(\Delta_i|\Delta_i)$.
			\EndFor
			\State \Return $X_1,\cdots,X_{N}$
			\EndProcedure
		\end{algorithmic}
	\end{algorithm}
	
	Note that since there are $s=\frac{8\kappa^{2N}\cdot d^{2k+1}\cdot3^{3N+1}}{\delta^2\lambda_{\min}}$ vertices in $A$ and only edges that belong to the $k$-discs of these vertices will be collected by our algorithm, the space complexity of the algorithm is $O(s d^{k+1})=O(\frac{\kappa^{2N}\cdot d^{3k+2}\cdot3^{3N+1}}{\delta^2\lambda_{\min}})$, which is constant.
	
	
	Now we show the correctness of the algorithm.  
	
	We let $A\sim\Uniform_V$ denote that $A$ is the set of $s$ vertices sampled uniformly at random from $V$. For any $i\leq N$, let
	$A_i$ be the set of vertices from $A$ with $k$-disc isomorphic to $\Delta_i$ in the input graph $G$, that is, $A_i:=\{v|v\in A,\disc_{k,G}(v)\cong \Delta_i\}$. Note that $\E_{A\sim\Uniform_V}[|A_i|]=s\cdot\frac{|V_i|}{n}$. 
	
	Let $\beta_i=3^{i-N-2}, \theta_i=(3\kappa)^{i-N-1}$. By Chernoff bound and our setting of $s$ which satisfy that $s\geq \Omega(\frac{1}{(\delta\theta_i)^2\beta_i})$, we have the following claim. 
	\begin{claim}
		For any $i\leq N$, $\Pr_{A\sim\Uniform_V}[\abs{\frac{|A_i|}{s}-\frac{|V_i|}{n}}\leq \delta\theta_i]\geq 1-\beta_i$.
	\end{claim}
	
	We assume for now that $A$ is a fixed set with $s$ vertices. 
	%
	We let $\sigma\sim \Distri$ denote that the edge ordering $\sigma$ is sampled from $\Distri$. 
	For any $v\in A$, let $Z_{v,i}$ be the indicator random variable of the event that the observed $k$-disc $\disc_k(v,\sigma)$ of $v$ is isomorphic to $\Delta_i$ for $\sigma\sim\Distri$. Note that $\Pr_{\sigma\sim\Distri}[Z_{v,i}=1]=\lambda(\Delta_i|\Delta_j)$ if $\disc_{k,G}(v)\cong\Delta_j$. Let $Y_i^{(\sigma)}:=\frac{|\{v:v\in A,\disc_k(v,\sigma)\cong\Delta_i\}|}{s}$ denote the fraction of vertices in $A$ with observed $k$-disc isomorphic to $\Delta_i$. By definition, it holds that $Y_i^{(\sigma)}=\frac{1}{s}\sum_{\substack{v\in A_j\\ j\in\Good(i)\cup\{i\}}} Z_{v,i}$, and furthermore, 
	$\E_{\sigma\sim \Distri}[Y_i^{(\sigma)}] = \frac{1}{s}\sum_{j\in \Good(i)\cup\{i\}}|A_j|\cdot\lambda(\Delta_i|\Delta_j)$. Let $X_i^{(\sigma)}= (Y_i^{(\sigma)}-\sum_{j\in \mathcal{G}(i)}X_j^{(\sigma)}\cdot\lambda(\Delta_i|\Delta_j))\cdot\lambda^{-1}(\Delta_i|\Delta_i)$. 
	
	We have the following claim.
	\begin{claim}~\label{claim:expectation}
		For any $i\leq N$, it holds that $\E_{\sigma\sim \Distri}[X_i^{(\sigma)}]=\frac{|A_i|}{s}$. 
	\end{claim}
\begin{proof}
	We prove the claim by induction. For $i=1$, it holds that $\E_{\sigma\sim \Distri}[X_1^{(\sigma)}]=\E_{\sigma\sim \Distri}[Y_1^{(\sigma)}]\cdot\lambda^{-1}(\Delta_1|\Delta_1)=\frac{|A_1|}{s}\cdot \lambda(\Delta_1|\Delta_1)\cdot \lambda^{-1}(\Delta_1|\Delta_1)=\frac{|A_1|}{s}$. Assuming that the claim holds for $i-1$, and we prove it holds for $i$ as well. By definition, we have that
	\begin{eqnarray*}
		&&\E_{\sigma\sim \Distri}[X_i^{(\sigma)}]=\E_{\sigma\sim \Distri}[(Y_i^{(\sigma)}-\sum_{j\in \mathcal{G}(i)}X_j^{(\sigma)}\cdot \lambda(\Delta_i|\Delta_j))\cdot\lambda^{-1}(\Delta_i|\Delta_i)]\\
		&=& \Big(\sum_{j\in\Good(i)\cup\{i\}}\frac{|A_j|}{s}\cdot\lambda(\Delta_i|\Delta_j) -\sum_{j\in \mathcal{G}(i)}\E_{\sigma\sim \Distri}[X_j^{(\sigma)}]\cdot \lambda(\Delta_i|\Delta_j)\Big)\cdot \lambda^{-1}(\Delta_i|\Delta_i)\\	
		&=&\Big(\sum_{j\in\Good(i)\cup\{i\}}\frac{|A_j|}{s}\cdot\lambda(\Delta_i|\Delta_j) -\sum_{j\in \mathcal{G}(i)}\frac{|A_j|}{s}\cdot \lambda(\Delta_i|\Delta_j)\Big)\cdot \lambda^{-1}(\Delta_i|\Delta_i)
		=\frac{|A_i|}{s},
	\end{eqnarray*}
	where the second to last equation follows from the induction. 
\end{proof}
	
	We can now bound the variance of $Y_i^{(\sigma)}$ as shown in the following claim. 
	\begin{claim}~\label{claim:variance}
		For any $i\leq N$, it holds that $\Var_{\sigma\sim \Distri}[Y_i^{(\sigma)}]\leq \frac{1}{s^2} \cdot d^{2k+1} \sum_{j\in \Good(i)\cup\{i\}}|A_j|\cdot\lambda(\Delta_i|\Delta_j)$.
	\end{claim}
	\begin{proof}
		Recall that $Y_i^{(\sigma)}=\frac{1}{s}\sum_{\substack{v\in A_j\\ j\in\Good(i)\cup\{i\}}} Z_{v,i}$. Note that for each $v\in A$, by the independence assumption on $\Distri$, the random variable $Z_{v,i}$ can only correlate with the corresponding variables for vertices that are within distance at most $2k$ from $v$. The number of such vertices is at most $1+d+d^2+\cdots+d^{2k}<d^{2k+1}$. Let $\dt(u,v)$ denote the distance between $u,v$ in the graph $G$. Then we have that
		\begin{eqnarray*}
			&&\E_{\sigma\sim \Distri}[(\sum_{\substack{v\in A_j\\j\in\Good(i)\cup\{i\} }}Z_{v,i})^2] = \E_{\sigma\sim \Distri}[\sum_{\substack{v\in A_j\\j\in\Good(i)\cup\{i\} }}\sum_{\substack{u\in A_j\\j\in\Good(i)\cup\{i\} }}Z_{v,i}\cdot Z_{u,i}]\\
			&=& \E_{\sigma\sim \Distri}[\sum_{\substack{v\in A_j\\j\in\Good(i)\cup\{i\}}} (\sum_{\substack{u\in A_j\\j\in\Good(i)\cup\{i\}\\ \dt_G(u,v)\leq 2k}}Z_{v,i}\cdot Z_{u,i} +
			\sum_{\substack{u\in A_j\\j\in\Good(i)\cup\{i\}\\ \dt_G(u,v)> 2k}}Z_{v,i}\cdot Z_{u,i})]\\
			&\leq &\E_{\sigma\sim \Distri}[\sum_{\substack{v\in A_j\\j\in\Good(i)\cup\{i\}}} \sum_{\substack{u\in A_j\\j\in\Good(i)\cup\{i\}\\ \dt_G(u,v)\leq 2k}}Z_{v,i}] + \left(\sum_{\substack{j\in\Good(i)\cup\{i\}}}
			[|A_j|]\cdot\lambda(\Delta_i|\Delta_j)\right)^2\\
			&\leq &d^{2k+1}\E_{\sigma\sim \Distri}[\sum_{\substack{v\in A_j\\j\in\Good(i)\cup\{i\}}} Z_{v,i}] + (\E_{\sigma\sim \Distri}[\sum_{\substack{v\in A_j\\j\in\Good(i)\cup\{i\}}}Z_{v,i}])^2\\
			&=&d^{2k+1} \cdot\sum_{j\in \Good(i)\cup\{i\}}|A_j|\cdot\lambda(\Delta_i|\Delta_j)+ (\E_{\sigma\sim \Distri}[\sum_{\substack{v\in A_j\\j\in\Good(i)\cup\{i\}}}Z_{v,i}])^2,
		\end{eqnarray*}
		where the first inequality follows from the fact that $Z_{u,i}\leq 1$, and that for any two vertices $u,v$ with $\dt(u,v)>2k$, $Z_{u,i},Z_{v,i}$ are independent. 
		%

		Then we have that
		\begin{eqnarray*}
			&&\Var_{\sigma\sim \Distri}[Y_i^{(\sigma)}]
			=\frac{1}{s^2}\cdot \Var_{\sigma\sim \Distri}[\sum_{\substack{v\in A_j\\j\in\Good(i)\cup\{i\} }}Z_{v,i}] \\
			&=&\frac{1}{s^2}\left(\E_{\sigma\sim \Distri}[(\sum_{\substack{v\in A_j\\j\in\Good(i)\cup\{i\} }}Z_{v,i})^2]-(\E_{\sigma\sim \Distri}[\sum_{\substack{v\in A_j\\j\in\Good(i)\cup\{i\} }}Z_{v,i}])^2\right)\\
			&\leq & \frac{1}{s^2} \cdot d^{2k+1} \sum_{j\in \Good(i)\cup\{i\}}|A_j|\cdot\lambda(\Delta_i|\Delta_j).
		\end{eqnarray*} 
		%
	\end{proof}
	
	We next prove that each $X_i^{(\sigma)}$ is concentrated around its expectation with high probability.
	\begin{claim}~\label{claim:x_i_concentration}
		For any $i\leq N$, it holds that $\Pr_{\sigma\sim \Distri}[|X_i^{(\sigma)}-\E_{\sigma\sim \Distri}[X_i^{(\sigma)}]|\leq \theta_i\delta]\geq 1-\beta_i$.
	\end{claim}
	\begin{proof}
		We prove the claim by induction. For $i=1$, it holds that
		\begin{eqnarray*}
			&&\Pr_{\sigma\sim \Distri}[|X_1^{(\sigma)}-\E_{\sigma\sim \Distri}[X_1^{(\sigma)}]|\leq \theta_1\delta]\leq\Pr_{\sigma\sim \Distri}[|Y_1^{(\sigma)}-\E_{\sigma\sim \Distri}[Y_1^{(\sigma)}]|\cdot\lambda^{-1}(\Delta_1|\Delta_1)\geq \delta\theta_1]\\
			&\leq& \frac{\Var_{\sigma\sim \Distri}[Y_1^{(\sigma)}]}{(\delta\theta_1)^2\cdot\lambda^2(\Delta_1|\Delta_1)} 
			\leq \frac{d^{2k+1}|A_1|\cdot \lambda(\Delta_1|\Delta_1)}{s^2\cdot (\delta\theta_1)^2\cdot\lambda^2(\Delta_1|\Delta_1)}
			\leq \frac{d^{2k+1}}{s (\delta\theta_1)^2\cdot\lambda(\Delta_1|\Delta_1)}
			\leq \beta_1,
		\end{eqnarray*}
		where the last inequality follows from our choice of $\beta_1,\theta_1$ and $s$ which satisfy that $s\geq \frac{d^{2k+1}}{ (\delta\theta_1)^2\beta_1\cdot\lambda(\Delta_1|\Delta_1)}$.
		Now let us consider arbitrary $i\geq 2$, assuming that the claim holds for any $j\leq i-1$. First, with probability (over the randomness that $\sigma\sim \Distri$) at least $1-\sum_{j=1}^{i-1}\beta_j = 1-\sum_{j=1}^{i-1}3^{j-N-2}\geq 1-\frac{\beta_i}{2}$, it holds that for all $j\leq i-1$, 
		$|X_j^{(\sigma)}-\E_{\sigma\sim \Distri}[X_j^{(\sigma)}]|\leq \theta_j\delta$. This further implies that with probability at least $1-\frac{\beta_i}{2}$, 
		\begin{eqnarray*}
			&&|\sum_{j\in \mathcal{G}(i)}X_j^{(\sigma)}\cdot\frac{\lambda(\Delta_i|\Delta_j)}{\lambda(\Delta_i|\Delta_i)}-\E_{\sigma\sim \Distri}[(\sum_{j\in \mathcal{G}(i)}X_j^{(\sigma)}\cdot\frac{\lambda(\Delta_i|\Delta_j)}{\lambda(\Delta_i|\Delta_i)})]|\\
			&\leq& \sum_{j\in \mathcal{G}(i)}|X_j^{(\sigma)}-\E_{\sigma\sim \Distri}[X_j^{(\sigma)}]|\cdot \frac{\lambda(\Delta_i|\Delta_j)}{\lambda(\Delta_i|\Delta_i)}\\
			&\leq& \sum_{j\in \mathcal{G}(i)}\delta\theta_j\cdot \frac{\lambda(\Delta_i|\Delta_j)}{\lambda(\Delta_i|\Delta_i)}
			\leq \kappa\cdot\sum_{j\in \mathcal{G}(i)}\delta\theta_j 
			\leq\kappa\cdot\sum_{j=1}^{i-1}\delta(3\kappa)^{j-N}\leq \frac{\theta_i\delta}{2}.
		\end{eqnarray*}

		Now note that 
		\begin{eqnarray*}
			&&\Pr_{\sigma\sim\Uniform}[|Y_i^{(\sigma)}-\E[Y_i^{(\sigma)}]|\cdot \lambda(\Delta_i|\Delta_i)^{-1}\geq \frac{\theta_i\delta}{2}]
			\leq \frac{4\cdot\Var_{\sigma\sim\Distri}[Y_i^{(\sigma)}]}{(\delta\theta_i)^2\cdot\lambda(\Delta_i|\Delta_i)^2}\\
			&\leq& \frac{4\cdot d^{2k+1} \sum_{j\in \Good(i)\cup\{i\}}|A_j|\cdot\lambda(\Delta_i|\Delta_j)}{s^2\cdot (\delta\theta_i)^2\cdot\lambda(\Delta_i|\Delta_i)^2}
			\leq \frac{4\cdot d^{2k+1} \cdot\kappa}{s\cdot (\delta\theta_i)^2\cdot\lambda(\Delta_i|\Delta_i)}
			\leq
			\frac{\beta_i}{2}, 
		\end{eqnarray*}
		where the last inequality follows from our choice of
		$\beta_i,\theta_i$ and $s$ which satisfy that $s\geq \frac{8\kappa\cdot d^{2k+1}}{ (\delta\theta_i)^2\beta_i\cdot\lambda(\Delta_i|\Delta_i))}$.
		%
		%
		
		Therefore, with probability (over $\sigma\sim \Distri$) at least $1-\frac{\beta_i}{2}-\frac{\beta_i}{2}=1-\beta_i$, it holds that 
		\begin{eqnarray*}
			&&|X_i^{(\sigma)}-\E_{\sigma\sim\Distri}[X_i^{(\sigma)}]|\\
			&=& \left|\frac{Y_i^{(\sigma)}-\sum_{j\in \mathcal{G}(i)}X_j^{(\sigma)}\cdot\lambda(\Delta_i|\Delta_j)}{\lambda(\Delta_i|\Delta_i)}-\E_{\sigma\sim\Distri}\left[\frac{Y_i^{(\sigma)}-\sum_{j\in \mathcal{G}(i)}X_j^{(\sigma)}\cdot\lambda(\Delta_i|\Delta_j)}{\lambda(\Delta_i|\Delta_i)}\right]\right|\\
			&=&\left|\frac{(Y_i^{(\sigma)}-\E_{\sigma\sim\Distri}[Y_i^{(\sigma)}])}{\lambda(\Delta_i|\Delta_i)}-\left(\sum_{j\in \mathcal{G}(i)}X_j^{(\sigma)}\cdot\frac{\lambda(\Delta_i|\Delta_j)}{\lambda(\Delta_i|\Delta_i)}-\E_{\sigma\sim\Distri}[(\sum_{j\in \mathcal{G}(i)}X_j^{(\sigma)}\cdot\frac{\lambda(\Delta_i|\Delta_j)}{\lambda(\Delta_i|\Delta_i)})]\right)\right|\\
			&\leq & \frac{\delta\theta_i}{2} +\frac{\delta\theta_i}{2} = \delta\theta_i.
		\end{eqnarray*}	
	\end{proof}

	Now with probability (over both $A\sim \Uniform_V$ and $\sigma\sim\Distri$) at least $1-\beta_i-\beta_i$, it holds that
	\begin{eqnarray*}
		&&\left| X_i^{(\sigma)}-\frac{|V_i|}{n}\right|\leq \left|X_i^{(\sigma)}-\E_{\sigma\sim\Distri}[X_i^{(\sigma)}]\right| + \left|{\E_{\sigma\sim\Distri}[X_i^{(\sigma)}]-\frac{|V_i|}{n}}\right| \\
		&=& \left|{X_i^{(\sigma)}-\E_{\sigma\sim\Distri}[X_i^{(\sigma)}]}\right| +\left|{\frac{|A_i|}{s}-\frac{|V_i|}{n}}\right|
		\leq \delta\theta_i+\delta\theta_i = 2\delta\theta_i.
	\end{eqnarray*}
	
	Finally, with probability at least $1-2\sum_{j=1}^N\beta_j=1-2\sum_{j=1}^N3^{j-N-2}\geq 1-\frac{1}{3}$, it holds that for all $i\leq N$, $|X_i-\frac{|V_i|}{n}|\leq 2\theta_i\delta\leq \delta$. This completes the proof of the lemma.
\end{proof}

\section{Constant-Space Property Testing}\label{sec:property_testing}
In this section, we show how to transform constant-query property testers in the adjacency list model to constant-space property testers in the random order stream model in a single pass and prove our main result Theorem~\ref{thm:const_query_testable_property}. (Our transformation also works in the locally random order model as defined in Definition~\ref{def:local_distribution}, but for simplicity, we only state our result in the uniformly random order model.) 
\begin{definition}
	Let $\Pi=(\Pi_n)_{n\in \mathbb{N}}$ be a property of $d$-bounded graphs, where $\Pi_n$ is a property of graphs with $n$ vertices. We say that $\Pi$ is testable with query complexity $q$, if for every $\varepsilon,d$ and $n$, there exists an algorithm that performs $q=q(n,\varepsilon,d)$ queries to the adjacency list of the graph, and with probability at least $2/3$, accepts any $n$-vertex $d$-bounded graph $G$ satisfying $\Pi$, and rejects any $n$-vertex $d$-bounded graph that is $\varepsilon$-far from satisfying $\Pi$. If $q=q(\varepsilon,d)$ is a function independent of $n$, then we call $\Pi$ constant-query testable.
\end{definition}

Similarly, we can define constant-space testable properties in graph streams. 

\begin{definition}
	Let $\Pi=(\Pi_n)_{n\in \mathbb{N}}$ be a property of $d$-bounded graphs, where $\Pi_n$ is a property of graphs with $n$ vertices. We say that $\Pi$ is testable with space complexity $q$, if for every $\varepsilon,d$ and $n$, there exists an algorithm that performs a single pass over an edge stream of an $n$-vertex $d$-bounded graph $G$, uses $q=q(n,\varepsilon,d)$ space, and with probability at least $2/3$, accepts $G$ if it satisfies $\Pi$, and rejects $G$ if it is $\varepsilon$-far from satisfying $\Pi$. If $q=q(\varepsilon,d)$ is a function independent of $n$, then we call $\Pi$ constant-space testable.
\end{definition}



The proof of Theorem~\ref{thm:const_query_testable_property} is based on the following known fact: every constant-query property tester can be simulated by some canonical tester which only samples a constant number of vertices, and explores the $k$-discs of these vertices, and then makes deterministic decisions based on the explored subgraph. This implies that it suffices to approximate the distribution of $k$-disc types of the input graph to test the corresponding property. Formally, we will use the following lemma relating the constant-time testable properties and their $k$-disc distributions. For any graph $G$, let $S_{G,k}$ denote the subgraph spanned by the union of $k$-discs rooted at $k$ uniformly sampled vertices from $G$. The following lemma is implied by Lemma 3.2 in \cite{CPS16:testing} (which was built on~\cite{GT03:three} and~\cite{GR11:proximity}). ({The result in~\cite{CPS16:testing} is stated for $d$-bounded directed graphs, while it also holds in the undirected case.})
\begin{lemma}~\label{lemma:constant_query}
	Let $\Property=(\Property_n)_{n\in \mathbb{N}}$ be any $d$-bounded graph property that is testable with $q=q(\varepsilon,d)$ query complexity in the adjacency list model. Then there exist integer $n_0$, $k=c\cdot q$ for some large universal constant $c$, and an infinite sequence of $\mathcal{F}=\{\mathcal{F}_n\}_{n\geq n_0}$ such that for any $n\geq n_0$, $\mathcal{F}_n$ is a set of graphs, each being a union of $k$ disjoint $k$-discs, and for any $n$-vertex graph $G$,
	\begin{itemize}
		\item if $G$ satisfies $\Property_n$, then with probability \emph{at most} $\frac{5}{12}$, $S_{G,k}$ is isomorphic to one of the members in $\mathcal{F}_n$. 
		\item if $G$ is $\varepsilon$-far from satisfying $\Property_n$, then with probability \emph{at least} $\frac{7}{12}$, $S_{G,k}$ is isomorphic to one of the members in $\mathcal{F}_n$. 
	\end{itemize} 
\end{lemma}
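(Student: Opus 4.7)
The plan is to invoke the standard Goldreich--Trevisan canonicalization in the adjacency-list model, as developed in \cite{GT03:three} and extended to bounded degree in \cite{GR11:proximity}, and then absorb a small constant loss into the original $2/3$-versus-$1/3$ success guarantee. Starting from an arbitrary constant-query tester $T$ for $\Property$ with query complexity $q=q(\varepsilon,d)$, I would first replace every adaptive query that $T$ performs by a full BFS of depth $k := c \cdot q$ rooted at the queried vertex. This increases the query count to a constant $O(q \cdot d^k)$ but ensures that the new tester $T'$ depends on the input only through the labeled subgraph induced by the union of $k$-discs around the at most $q$ vertices that $T$ originally sampled. Padding with additional uniformly random vertices if necessary, I may assume $T'$ samples exactly $k$ uniform vertices (choose $c$ large enough so that $q \le k$).

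Next, I would make $T'$ isomorphism-oblivious. Because $\Property$ is a graph property, its acceptance probability is unchanged if we relabel the observed vertices via a uniformly random injection into $[n]$; equivalently, $T'$'s output can be taken to be a function only of the rooted-isomorphism type of the union of the sampled $k$-discs. A standard Yao-style fixing of the internal coin tosses of $T'$ (choosing a setting that preserves one-sided error $\le 1/3$ on every input graph $G$, which exists by averaging) then yields a deterministic map $D$ from rooted-isomorphism types of unions of $k$ sampled $k$-discs into $\{\mathrm{accept},\mathrm{reject}\}$ whose two-sided error on $\Property$ is still bounded by $1/3$.

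Finally, I would define $\mathcal{F}_n$ to be the set of isomorphism types of unions of $k$ \emph{pairwise vertex-disjoint} $k$-discs on which $D$ outputs reject. A union bound shows that $k$ uniformly sampled vertices have pairwise disjoint $k$-discs except with probability at most $\binom{k}{2}\cdot d^{2k+1}/n$, which is at most $1/12$ for all $n\ge n_0(k,d)$. On a graph $G$ satisfying $\Property_n$, the event $\{S_{G,k}\in\mathcal{F}_n\}$ is contained in the event that $T'$ rejects, so its probability is at most $1/3\le 5/12$. On a graph $G$ that is $\varepsilon$-far from $\Property_n$, $T'$ rejects with probability at least $2/3$; the non-disjoint contribution to this rejection probability is at most $1/12$, so the rejection event restricted to disjoint samples, which is exactly $\{S_{G,k}\in\mathcal{F}_n\}$, has probability at least $2/3-1/12=7/12$.

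The main obstacle is to rigorously carry out the first step: replacing the adaptive, randomized query pattern of an arbitrary adjacency-list tester by a non-adaptive uniform sampling of $O(q)$ vertices followed by full BFS of depth $O(q)$, within a constant-factor blow-up in query complexity and without damaging the error. This is exactly the content of the canonical-tester reductions of \cite{GT03:three,GR11:proximity} adapted to the bounded-degree model, and the universal constant $c$ in the statement encodes this blow-up. The isomorphism-averaging and Yao-derandomization steps are essentially formal, and the disjointness slack of $1/12$ is an elementary counting bound that uses only the constancy of $d$ and $k$.
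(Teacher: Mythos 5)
Your overall route (canonicalize via GT03/GR11-style arguments, restrict to disjoint unions of $k$-discs, absorb a $1/12$ disjointness slack for large $n$) is the same one the paper relies on: the paper does not prove this lemma itself but imports it from Lemma 3.2 of \cite{CPS16:testing}, which is exactly the canonical-tester machinery you describe, so your structure and your handling of adaptivity (deferring the reduction to uniform sampling plus BFS to \cite{GT03:three,GR11:proximity}) is consistent with the paper's intent.

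However, there is a genuine gap in your derandomization step. You claim that ``a standard Yao-style fixing of the internal coin tosses of $T'$ (choosing a setting that preserves \dots error $\le 1/3$ on every input graph $G$, which exists by averaging)'' yields a deterministic decision map $D$ with error still $1/3$. Averaging over coins only produces, for each fixed input graph, a good coin setting; it does not produce a single setting that is simultaneously good for all (exponentially many) $n$-vertex graphs, and with error $1/3$ no union bound can rescue this --- if such a fixing existed, every randomized decision rule could be derandomized for free. The standard repair, and the reason the canonical-tester statement carries a degraded gap, is to first amplify the tester's success probability by constant repetition (a constant-factor blow-up in $q$, absorbed into the universal constant $c$) and then define $D$ by thresholding: $D$ rejects an isomorphism type $\tau$ iff the amplified tester rejects with conditional probability at least $1/2$ given that it observed $\tau$. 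By Markov's inequality this costs a factor $2$ in the error, and combined with the $1/12$ disjointness slack it is precisely what produces constants of the form $5/12$ and $7/12$; your version only ``works'' numerically because the invalid no-loss fixing keeps the error at $1/3$. (Two smaller points: the tester is two-sided, so ``one-sided error'' is a slip; and on the yes-side containment $\{S_{G,k}\in\mathcal{F}_n\}\subseteq\{D\text{ rejects}\}$ you should note that $D$ depends only on the root-annotated isomorphism type of the observed union, so the containment holds even when the sampled discs overlap but the spanned subgraph happens to be isomorphic to a member of $\mathcal{F}_n$.)
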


Now we are ready to give the proof of Theorem~\ref{thm:const_query_testable_property}, which follows almost directly from the proof of Theorem 1.1 in~\cite{CPS16:testing}. For the sake of completeness, we present the full proof here.
\begin{proof}[Proof of Theorem~\ref{thm:const_query_testable_property}]
	
Let $\Pi=(\Pi_n)_{n\in \mathbb{N}}$ be any property that is testable with query complexity $q=q(\varepsilon,d)$ in the adjacency list model. We set $k=c\cdot q$ and let $\FF_n$ be the set of graphs as guaranteed in Lemma~\ref{lemma:constant_query}. Note that each subgraph $F=(\Gamma_1,\cdots,\Gamma_k)\in\FF_n$ is a multiset of $k$-discs. Set $N=N(d,k)$, $\delta=\frac{1}{48(2kN)^k}$. Let $\Lambda_k$ be the set of probabilities as guaranteed in Lemma~\ref{lemma:uniformly_random}. Let $n_1:=n_1(d,k)$ be some sufficiently large constant.

Now let us describe our random order streaming algorithm for testing if an $n$-vertex $d$-bound graph $G$ satisfies $\Pi_n$ or is $\varepsilon$-far from satisfying $\Pi_n$. If $n<n_1$, we trivially test $\Pi_n$ with constant space by storing the whole graph which contains at most $O(dn_1)$ edges. If $n\geq n_1$, we first invoke the algorithm \textsc{$k$-Disc\_Distribution}(\textsc{Stream}($G$), $\Lambda_k,n,d,k,\delta$) to get estimators $X_1,\cdots,X_N$ for the fraction $f_1,\cdots, f_N$ of vertices whose $k$-discs are isomorphic to $\Delta_1,\cdots,\Delta_N$, respectively. Then for each $F=(\Gamma_1,\cdots,\Gamma_k)\in \FF_n$, we calculate its \emph{empirical frequency} as $\estim(F)=\frac{\prod_{i=1}^N\binom{X_i\cdot n}{x_i}}{\binom{n}{k}}$, where $x_i$ is the number of copies among $\Gamma_1,\cdots,\Gamma_k$ that are of the same type as $\Delta_i$, for $1\leq i\leq N$. Finally, we accept the graph if and only if $\sum_{F\in\mathcal{F}_n}\estim(F)<\frac{1}{2}$.
	
Note that the space used by the algorithm is a constant. More precisely, the space complexity is $O(\max\{dn_1, \allowdisplaybreaks \frac{\kappa^{2N}\cdot d^{3k+2}\cdot3^{3N+1}}{\delta^2\lambda_{\min}}\})\allowdisplaybreaks =O(\max\{dn_1, \allowdisplaybreaks \frac{\kappa^{2N}\cdot d^{3k+2}\cdot3^{3N+1}\cdot(2kN)^{2k}}{\lambda_{\min}}\})$, where the equation follows from Lemma~\ref{lemma:approx_k_disc_type} and our setting of $\delta$.

Now we show the correctness of the algorithm. Note that we only need to consider the case that $n\geq n_1$. By Lemma~\ref{lemma:approx_k_disc_type}, with probability at least $2/3$, it holds that for any $i\leq N$, $|X_i-f_i|\leq \delta$. In the following, we will condition on this event and we will prove that 
	\begin{itemize}
	\item if $G$ satisfies $\Pi_n$, then $\sum_{F \in \FF_n} \appr(F) < \frac12$, and
	\item if $G$ is $\varepsilon$-far from satisfying $\Pi_n$, then $\sum_{F \in \FF_n} \appr(F) \ge \frac12$.
	\end{itemize}
 This would complete the proof.

	
	
	
	For every $F = \{\Gamma_1,\dots,\Gamma_k\} \in \FF_n$ and the relevant $x_1, \dots, x_N$, we will study $\pr(\Gamma_1, \dots, \Gamma_{k}) := \frac{\prod_{i=1}^N \binom{f_i\cdot n}{x_i}}{\binom{n}{k}}$, from which we will obtain the required bounds for $\sum_{F \in \FF_n} \appr(F)$.
	
	Observe that for any multiset $\{\Gamma_1, \dots, \Gamma_{k}\}$, the probability that the $k$-discs of $k$ vertices sampled uniformly at random (without replacement) span a subgraph isomorphic to the subgraph corresponding to $\{\Gamma_1, \dots, \Gamma_{k}\}$ has the \emph{multivariate hypergeometric distribution} with parameters $n, f_1\cdot n,\allowbreak\dots, f_N\cdot n, k$. That is, if for every $i\leq N$, there are exactly $x_i$ copies in the multiset $\{\Gamma_1, \dots, \Gamma_{k}\}$ that are of the same isomorphic type as $\Gamma_i$ (note that $x_1 + \dots + x_N = k$ 
	for any $1 \le i \le N$), then the probability that the subgraph $S_{G,k}$ spanned by $k$-discs of $k$ uniformly sampled vertices is isomorphic to $\{\Gamma_1, \dots, \Gamma_{k}\}$ is equal to $\pr(\Gamma_1, \dots, \Gamma_{k}) = \frac{\prod_{i=1}^N \binom{f_i\cdot n}{x_i}}{\binom{n}{k}}$, where we assumed $\binom{L}{M} = 0$ for $L < M$.
	
	To study the relation between $\appr(F)$ and $\pr(F)$, we begin with the following auxiliary claim.
	
	\begin{claim}
		\label{claim:binomial}
		For any $i$, if $|X_i - f_i| \le \delta $, it holds that $|\binom{X_i\cdot n}{x_i} - \binom{f_i\cdot n}{x_i}| \le 4 \delta n^{x_i}$.
	\end{claim}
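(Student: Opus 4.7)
The plan is to view $\binom{y}{x_i}$ as a polynomial in the real variable $y$ (of degree $x_i$) and bound how much it can change as $y$ moves from $f_i\cdot n$ to $X_i\cdot n$, a distance of at most $\delta n$. The case $x_i = 0$ is trivial since $\binom{X_i n}{0} = \binom{f_i n}{0} = 1$, so we may assume $x_i \ge 1$ and write
$$\binom{y}{x_i} \;=\; \frac{1}{x_i!}\,P(y), \qquad P(y) := \prod_{j=0}^{x_i-1}(y-j),$$
which extends naturally to real arguments.

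Set $a := X_i\cdot n$ and $b := f_i\cdot n$, so $|a-b|\le \delta n$. The main step is the standard telescoping identity
$$P(a)-P(b)\;=\;(a-b)\,\sum_{j=0}^{x_i-1}\,\prod_{\ell<j}(a-\ell)\,\prod_{\ell>j}(b-\ell),$$
which follows by writing each $p_j - q_j = a - b$ and collapsing a product-difference into a single term per index $j$ (easy to verify by induction on $x_i$, or by the mean value theorem applied to the polynomial $P$).

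Next I bound each factor. Because $f_i\in[0,1]$ and $|X_i-f_i|\le\delta$, both $a$ and $b$ lie in $[-\delta n,(1+\delta)n]$. For $0\le \ell\le x_i-1\le k-1$ and for $n\ge n_1$ with $n_1$ chosen large enough relative to $k$ (as already assumed), we have $|a-\ell|,|b-\ell|\le (1+2\delta)n$. Each of the $x_i$ summands in the telescoping identity is therefore bounded by $((1+2\delta)n)^{x_i-1}$, giving
$$|P(a)-P(b)|\;\le\;\delta n\cdot x_i\cdot \bigl((1+2\delta)n\bigr)^{x_i-1}.$$
Dividing by $x_i!$ yields
$$\Bigl|\tbinom{X_i n}{x_i}-\tbinom{f_i n}{x_i}\Bigr|\;\le\;\frac{(1+2\delta)^{x_i-1}}{(x_i-1)!}\,\delta n^{x_i}\;\le\;4\,\delta n^{x_i},$$
where the final inequality uses that $\delta$ is small (indeed $\delta=1/(48(2kN)^k)$) so $(1+2\delta)^{x_i-1}\le e$ uniformly in $x_i\le k$, and $\max_{m\ge 0} 1/m! = 1$; thus the leading constant is in fact well below $4$.

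There is no genuine obstacle here; the only subtlety worth noting is that $X_i$ output by the algorithm may stray slightly outside $[0,1]$ because of the recursive subtraction in its definition, but the deviation is at most $\delta$, which only affects the analysis through a mild constant factor already absorbed by the bound $4\delta n^{x_i}$.
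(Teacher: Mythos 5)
Your proof is correct, but it takes a genuinely different route from the paper's. The paper stays with integer binomial coefficients and the identity $\binom{L}{M}=\binom{J}{M}+\sum_{K=J}^{L-1}\binom{K}{M-1}$: it sandwiches $\binom{X_i\cdot n}{x_i}$ between $\binom{f_i\cdot n-\lceil\delta n\rceil}{x_i}$ and $\binom{f_i\cdot n+\lceil\delta n\rceil}{x_i}$, bounds the $\lceil\delta n\rceil$ correction terms by roughly $2\delta n\cdot((1+\delta)n)^{x_i-1}$, and needs a separate case $f_i\cdot n\le\lceil\delta n\rceil+k$ (where $\binom{f_i\cdot n}{x_i}\le 4\delta n^{x_i}$ outright) to make the lower bound work under the convention $\binom{L}{M}=0$ for $L<M$. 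You instead view $\binom{y}{x_i}=\frac{1}{x_i!}\prod_{j=0}^{x_i-1}(y-j)$ as a polynomial and bound its increment over an interval of length $\delta n$ via the telescoping (or mean-value) estimate; this treats the upper and lower deviations symmetrically, avoids the case split, yields a constant near $e$ rather than $4$, and handles $X_i\cdot n$ being non-integral or slightly negative without any rounding, a point the paper glosses over. The one caveat is definitional: the paper's $\appr(F)$ is stated with the clipping convention $\binom{L}{M}=0$ for $L<M$, so if $\binom{X_i\cdot n}{x_i}$ is meant in that sense rather than as the polynomial, you need the same one-line supplement the paper uses in its second case (when either argument is below $x_i$, both binomials are at most $(\delta n+k)^{x_i}\le 4\delta n^{x_i}$ for $n\ge n_1$, by $|X_i-f_i|\le\delta$), after which your bound $\frac{(1+2\delta)^{x_i-1}}{(x_i-1)!}\,\delta n^{x_i}\le 4\delta n^{x_i}$ stands, since $2\delta k\le 1$ for the paper's choice of $\delta$.
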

	
	\begin{proof}
		Let us first observe that the inequality trivially holds for $x_i=0$, and it also easily holds for $x_i = 1$: $|\binom{X_i\cdot n}{x_i} - \binom{f_i\cdot n}{x_i}| = |X_i\cdot n - f_i\cdot n| \le \delta n \le 4 \delta n^{x_i}$. Therefore, let us assume now that $x_i \ge 2$.
		
		Let us recall a binomial identity: $\binom{L}{M} = \sum_{K=M-1}^{L-1} \binom{K}{M-1}$, which gives for $M \le J \le L$ the following: $\binom{L}{M} = \binom{J}{M} + \sum_{K=J}^{L-1} \binom{K}{M-1}$. Using this identity, that $f_i \le 1$, and $x_i \ge 2$, we obtain,
		\begin{eqnarray*}
			\binom{X_i\cdot n}{x_i}
			& \le &
			\binom{f_i\cdot n + \lceil\delta n\rceil}{x_i}
			=
			\binom{f_i\cdot n}{x_i} +
			\sum_{j=1}^{\lceil\delta n\rceil}\binom{f_i\cdot n+j-1}{x_i-1}
			\\
			& \le &
			\binom{f_i\cdot n}{x_i} +
			\lceil\delta n\rceil \cdot \binom{f_i\cdot n + \lceil\delta n\rceil - 1}{x_i-1}
			\\
			& \le &
			\binom{f_i\cdot n}{x_i} +
			2 \delta n (f_i\cdot n + \delta n)^{x_i-1}
			=
			\binom{f_i\cdot n}{x_i} + 2 \delta n ((1 + \delta)n)^{x_i-1}
			\\
			& = &
			\binom{f_i\cdot n}{x_i} + 2 \delta (1 + \delta)^{x_i-1} n ^{x_i}
			\le
			\binom{f_i\cdot n}{x_i} + 4 \delta n^{x_i}
			\enspace,
		\end{eqnarray*}
		%
		where in the last inequality, we used the fact that $(1+\delta)^{x_i-1} \le (1+\delta)^{k} \le 2$.
		
		Similarly, if $f_i\cdot n \ge \lceil\delta n\rceil + k$, we have $f_i\cdot n \ge \lceil\delta n\rceil + x_i$, and we obtain,
		\begin{eqnarray*}
			\binom{X_i}{x_i}
			& \ge &
			\binom{f_i\cdot n - \lceil \delta n \rceil}{x_i}
			=			
			\binom{f_i\cdot n}{x_i} -
			\sum_{j=1}^{\lceil\delta n\rceil}\binom{f_i\cdot n - j}{x_i-1}
			\\
			& \ge &
			\binom{f_i\cdot n}{x_i} -
			\lceil\delta n\rceil \binom{f_i\cdot n}{x_i-1}
			\ge
			\binom{f_i\cdot n}{x_i} - 2 \delta n \binom{n}{x_i-1}
			\\
			& \ge &
			\binom{f_i\cdot n}{x_i} - 2 \delta n \cdot n^{x_i-1}
			=
			\binom{f_i\cdot n}{x_i} - 2 \delta n^{x_i}
			\enspace.
		\end{eqnarray*}
		
		On the other hand, if $f_i\cdot n \le \lceil\delta n\rceil + k$, we note that $\binom{f_i\cdot n}{x_i} \le \binom{\lceil \delta n \rceil + k}{x_i} \le (\lceil \delta n \rceil + k)^{x_i} \le (2 \delta n)^{x_i} \le 4 \delta n^{x_i}$, where the third inequality follows from the fact that $n\geq n_1$ and that $n_1$ is a sufficiently large constant. Therefore since $\binom{X_i}{x_i} \ge 0$, we have $\binom{X_i}{x_i} \ge \binom{f_i\cdot n}{x_i} - 4 \delta n^{x_i}$.
		
		Now we can combine all the bounds above and obtain that for $x_2 \ge 2$, the following holds,
		\begin{displaymath}
		\binom{f_i\cdot n}{x_i} - 4 \delta n^{x_i}
		\le
		\binom{X_i}{x_i}
		\le
		\binom{f_i\cdot n}{x_i} + 4 \delta n^{x_i}
		\enspace,
		\end{displaymath}
		what yields the claim.
	\end{proof}
	
	Next, consider any $F = \{\Gamma_1,\dots,\Gamma_{k}\}$ and the corresponding frequencies $x_1, \dots, x_N$. Note that there are at most $k$ indices $i$ with $x_i > 0$, 
	and that $x_1 + \dots + x_N = k$. Let $\II = \{i: x_i>0, 1 \le i \le N\}$ and thus $|\II| \le k$ and $\prod_{i \in \II} n^{x_i} = n^k$. We have the following auxiliary claim.
	
	\begin{claim}
		\label{claim:aux1}
		For any $i$, conditioned on $|X_i - f_i| \le \delta$, the following inequalities hold:
		\begin{eqnarray*}
			\prod_{i \in \II}\left(\binom{f_i\cdot n}{x_i} +
			4 \delta n^{x_i}\right)
			& < &
			\prod_{i \in \II}\binom{f_i\cdot n}{x_i} + 4 \delta 2^k n^k
			\enspace,
			\\
			\prod_{i \in \II}\left(\binom{f_i\cdot n}{x_i} -
			4 \delta n^{x_i}\right)
			& > &
			\prod_{i \in \II}\binom{f_i\cdot n}{x_i} - 4 \delta 2^k n^k
			\enspace.
		\end{eqnarray*}
	\end{claim}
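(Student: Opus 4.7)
The plan is to prove both inequalities by expanding the product and bounding the error terms separately. Let $a_i := \binom{f_i \cdot n}{x_i}$ and $b_i := 4\delta n^{x_i}$ for $i \in \II$, and recall the key structural facts: $|\II| \le k$, $\sum_{i \in \II} x_i = k$, and $0 \le a_i \le \binom{n}{x_i} \le n^{x_i}$ since $f_i \le 1$. Also $\delta$ is a tiny constant (in particular, $4\delta < 1$ given the setting $\delta = \tfrac{1}{48(2kN)^k}$).

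The first step is to expand multilinearly:
\[
\prod_{i \in \II}(a_i + b_i) \;=\; \prod_{i \in \II} a_i \;+\; \sum_{\emptyset \neq S \subseteq \II} \Bigl(\prod_{i \in S} b_i\Bigr)\Bigl(\prod_{i \in \II \setminus S} a_i\Bigr).
\]
For any nonempty $S \subseteq \II$, the estimates $a_i \le n^{x_i}$ and $b_i = 4\delta n^{x_i}$ give
\[
\Bigl(\prod_{i \in S} b_i\Bigr)\Bigl(\prod_{i \in \II \setminus S} a_i\Bigr) \;\le\; (4\delta)^{|S|} \prod_{i \in \II} n^{x_i} \;=\; (4\delta)^{|S|} n^{k} \;\le\; 4\delta\, n^{k},
\]
where in the last step I use $4\delta < 1$ and $|S| \ge 1$. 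Since the number of nonempty $S \subseteq \II$ is at most $2^{|\II|}-1 \le 2^k - 1 < 2^k$, summing yields the first inequality:
\[
\prod_{i \in \II}(a_i + b_i) \;<\; \prod_{i \in \II} a_i \;+\; 4\delta\, 2^k n^k.
\]

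For the second inequality, the same expansion gives
\[
\prod_{i \in \II}(a_i - b_i) - \prod_{i \in \II} a_i \;=\; \sum_{\emptyset \neq S \subseteq \II} (-1)^{|S|} \Bigl(\prod_{i \in S} b_i\Bigr)\Bigl(\prod_{i \in \II \setminus S} a_i\Bigr),
\]
and since $a_i, b_i \ge 0$, each summand has absolute value at most $(4\delta)^{|S|} n^k \le 4\delta\, n^k$. Applying the triangle inequality and the same counting as above,
\[
\Bigl|\prod_{i \in \II}(a_i - b_i) - \prod_{i \in \II} a_i\Bigr| \;<\; 4\delta\, 2^k n^k,
\]
which immediately yields the lower bound $\prod_{i \in \II}(a_i - b_i) > \prod_{i \in \II} a_i - 4\delta\, 2^k n^k$.

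There is no real obstacle here — the proof is a clean multilinear expansion followed by a uniform bound on each cross term. The only point to be mindful of is that $a_i - b_i$ may individually be negative, so the second inequality cannot be proved term-by-term; one really must go through the signed expansion and control it in absolute value, as above. The hypothesis $|X_i - f_i| \le \delta$ does not enter the calculation (it is the conditioning under which the overall lemma operates, and it is used elsewhere to convert $X_i$-products into $f_i$-products via Claim~\ref{claim:binomial}); the present claim is a purely algebraic statement about the $f_i$'s and $x_i$'s.
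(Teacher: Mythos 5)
Your proof is correct and follows essentially the same multilinear-expansion argument as the paper (which writes the expansion over binary index sequences $\{j_i\}$ rather than subsets $S\subseteq\II$, but this is purely notational). The extra care you take with the second inequality — going through the signed expansion with the triangle inequality rather than a term-by-term bound — is a welcome elaboration of the paper's "the proof of the second inequality is analogous," and your explicit note that $4\delta<1$ is needed to collapse higher powers of $4\delta$ fills a small gap the paper leaves implicit.
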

	
	\begin{proof}
		For any $i \in \II$, we let $y_{i,0} = \binom{f_i\cdot n}{x_i}$ and $y_{i,1} = 4 \delta n^{x_i}$. Then
		\begin{eqnarray*}
				\prod_{i \in \II}
				\left(
				\binom{f_i\cdot n}{x_i} + 4 \delta n^{x_i}
				\right)
			& = &
			\prod_{i \in \II}(y_{i,0} + y_{i,1})
			=
			\sum_{i \in \II, j_i \in \{0,1\}} \prod_{i \in \II} y_{i,j_i}
			\\
			& = &
			\prod_{i \in \II} y_{i,0} +
			\sum_{\substack{i \in \II, j_i \in \{0,1\},\\
					\textrm{there exists $j_i=1$}}}\prod_{i\in \II}y_{i,j_i}
			\\
			& = &
			\prod_{i \in \II} \binom{f_i\cdot n}{x_i} +
			\sum_{\substack{i \in \II, j_i \in \{0,1\},\\
					\textrm{there exists $j_i=1$}}}\prod_{i\in \II}y_{i,j_i}
			\enspace.
		\end{eqnarray*}
		Now note that for any $i \in \II$,  $y_{i,0} = \binom{f_i\cdot n}{x_i} \le n^{x_i}$. Therefore, for any sequence $\{j_i\}_{i \in \II}$ with at least one element equal to $1$, we have the following bound $\prod_{i \in \II} y_{i,j_i} \le 4 \delta \prod_{i \in \II} n^{x_i} = 4 \delta n^k$. Since the total number of such indices is $2^k-1 < 2^k$, we have
		\begin{eqnarray*}
			\prod_{i \in \II}\binom{f_i\cdot n}{x_i} +
			\sum_{\substack{i \in \II, j_i\in\{0,1\},\\
					\textrm{there exists $j_i=1$} }}\prod_{i\in \II}y_{i,j_i}
			< \
			\prod_{i \in \II}\binom{f_i\cdot n}{x_i} + 4 \delta n^k \cdot 2^k
			\enspace,
		\end{eqnarray*}
		which completes the proof of the first inequality. The proof of the second inequality is analogues.
	\end{proof}
	
	Using Claims \ref{claim:binomial} and \ref{claim:aux1}, we can prove the following relation between $\appr(F)$ and $\pr(F)$.
	
	\begin{claim}
		\label{claim:appr-bound}
		If $|X_i - f_i| \le \delta$ for every $i$, then $|\appr(F) - \pr(F)| \le 4 \delta (2k)^k$ for every $F \in \FF_n$.
	\end{claim}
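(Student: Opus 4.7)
The plan is to reduce the bound to the numerators and then divide by $\binom{n}{k}$ at the end. First I would observe that for any index $i \notin \II$ we have $x_i = 0$, so $\binom{X_i \cdot n}{x_i} = \binom{f_i \cdot n}{x_i} = 1$, and therefore
\[
\prod_{i=1}^N \binom{X_i \cdot n}{x_i} = \prod_{i \in \II} \binom{X_i \cdot n}{x_i}, \qquad \prod_{i=1}^N \binom{f_i \cdot n}{x_i} = \prod_{i \in \II} \binom{f_i \cdot n}{x_i}.
\]
So it suffices to bound the absolute difference of these two products of $|\II| \le k$ terms.

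Next I would apply Claim~\ref{claim:binomial} factor by factor: under the hypothesis $|X_i - f_i| \le \delta$, we have $\binom{f_i \cdot n}{x_i} - 4\delta n^{x_i} \le \binom{X_i \cdot n}{x_i} \le \binom{f_i \cdot n}{x_i} + 4\delta n^{x_i}$ for every $i \in \II$. Multiplying these inequalities over $i \in \II$ sandwiches $\prod_{i \in \II} \binom{X_i \cdot n}{x_i}$ between the two products appearing in Claim~\ref{claim:aux1}. Applying that claim directly gives
\[
\left| \prod_{i \in \II} \binom{X_i \cdot n}{x_i} - \prod_{i \in \II} \binom{f_i \cdot n}{x_i} \right| \le 4\delta \cdot 2^k \cdot n^k.
\]

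Finally, I would divide both sides by $\binom{n}{k}$ and use the elementary bound $\binom{n}{k} \ge (n/k)^k$, which gives $n^k / \binom{n}{k} \le k^k$. Combining with the previous display yields
\[
|\appr(F) - \pr(F)| \le \frac{4\delta \cdot 2^k \cdot n^k}{\binom{n}{k}} \le 4\delta \cdot 2^k \cdot k^k = 4\delta (2k)^k,
\]
which is the desired bound. There is no real obstacle here: the proof is essentially a direct combination of the two preceding claims, together with the standard inequality $\binom{n}{k} \ge (n/k)^k$ to convert $n^k$ into a factor of $k^k$. The only thing worth double-checking is that the assumption $n \ge n_1$ (for $n_1$ a sufficiently large constant compared to $k$) is implicitly used so that $n/k \ge 1$ and the elementary binomial lower bound applies cleanly.
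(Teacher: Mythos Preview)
Your proposal is correct and follows essentially the same route as the paper's proof: restrict the products to the indices in $\II$, apply Claim~\ref{claim:binomial} termwise to sandwich each factor, invoke Claim~\ref{claim:aux1} to pass to the product, and finish with $\binom{n}{k} \ge (n/k)^k$. The paper presents the upper and lower bounds on $\appr(F)$ separately but the content is identical to what you wrote.
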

	
	\begin{proof}
		Let $F = \{\Gamma_1, \dots, \Gamma_k\} \in \FF_n$. By Claims \ref{claim:binomial} and \ref{claim:aux1}, we have
		\begin{eqnarray*}
			\appr(\Gamma_1 \dots \Gamma_k)
			& = &
			\frac{\prod_{i \in \II}\binom{X_i\cdot n}{x_i}}{\binom{n}{k}}
			\le
			\frac{\prod_{i \in \II}\left(\binom{f_i\cdot n}{x_i} + 4 \delta n^{x_i}\right)}{\binom{n}{k}}
			<
			\frac{\prod_{i \in \II}\binom{f_i\cdot n}{x_i} + 4\delta 2^k n^k}{\binom{n}{{k}}}
			\\
			& \le &
			\pr(\Gamma_1, \dots, \Gamma_k) + 4 \delta (2k)^k
			\enspace,
		\end{eqnarray*}
		where the last inequality follows from that $\binom{n}{k} \ge (\frac{n}{k})^k$. Similarly, by Claims \ref{claim:binomial} and \ref{claim:aux1}, we have,
		\begin{eqnarray*}
			\appr(\Gamma_1, \dots, \Gamma_k)
			& \ge &
			\frac{\prod_{i \in \II} \left(\binom{f_i\cdot n}{x_i} - 4 \delta n^{x_i}\right)} {\binom{n}{k}}
			\ge
			\frac{\prod_{i \in \II}\binom{f_i\cdot n}{x_i} - 4 \delta 2^k n^k}{\binom{n}{k}}
			\\
			& \ge &
			\pr(\Gamma_1, \dots, \Gamma_k) - 4 \delta (2k)^k
			\enspace.
		\end{eqnarray*}
	\end{proof}
	
	%
	
Now consider the case that $G$ satisfies $\Pi$. Then, by Lemma \ref{lemma:constant_query}, with probability at most $\frac{5}{12}$, the subgraph $S_{G,k}$ spanned by the $k$-discs of $k$ vertices that are sampled uniformly at random without replacement is isomorphic to some member in $\FF_n$, that is, $\sum_{F \in \FF_n} \pr(F) \le \frac{5}{12}$. Therefore, by Claim \ref{claim:appr-bound}, we have,
	\begin{eqnarray*}
		\sum_{F \in \FF_n} \appr(F)
		& < &
		\sum_{F \in \FF_n} \pr(F) + \sum_{F \in \FF_n} 4 \delta (2k)^k
		\le
		\sum_{F \in \FF_n} \pr(F) + N_{d,k}^k \cdot 4 \delta (2k)^k
		\\
		& \le &
		\frac{5}{12} + \frac{1}{12}
		=
		\frac12
		\enspace.
	\end{eqnarray*}
	
	Similarly, by Lemma \ref{lemma:constant_query}, if $G$ is $\varepsilon$-far from satisfying $\Pi$, then with probability at least $\frac{7}{12}$, the $k$-discs rooted at $k$ vertices that are sampled uniformly at random span a subgraph in $\FF_n$. Hence, Claim \ref{claim:appr-bound} gives
	\begin{eqnarray*}
		\sum_{F \in \FF_n} \appr(F)
		& \ge &
		\sum_{F \in \FF_n} \pr(F) - \sum_{F \in \FF_n} 4 \delta (2k)^k
		\ge
		\sum_{F \in \FF_n} \pr(F) - N_{d,k}^k \cdot 4 \delta (2k)^k
		\\
		& \ge &
		\frac{7}{12} - \frac{1}{12}
		=
		\frac12
		\enspace.
	\end{eqnarray*}
	
	These inequalities conclude the analysis of our algorithm and the proof of Theorem \ref{thm:const_query_testable_property}.
\end{proof}

\section{Constant-Time Approximation Algorithms}
As we mentioned in the introduction, to simulate any constant-time algorithm that is independent of the labeling of the vertices, and accesses the graph by sampling random vertices and exploring neighborhoods (or $k$-discs for some $k$) of these vertices, it suffices to have the distribution of $k$-disc types. Now we explain slightly more about this simulation and sketch the proof of  Theorem~\ref{thm:constant_time}.
In order to approximate the size of the solution of an optimization problem (e.g., maximum matching, minimum vertex cover), it has been observed by Parnas and Ron~\cite{PR07:sublinear_distributed} that it suffices to have efficient oracle $\Oracle_S$ access to a solution $S$. This is true since one can attain a good estimator for the size of $S$ by sampling a constant number of vertices, performing corresponding queries to the oracle $\Oracle_S$ and then returning the fraction of vertices that belong to $S$ based on the returned answers from $\Oracle_S$.
Nguyen and Onak~\cite{NO08:constant} implemented such an oracle via an elegant approach of locally simulating the classical greedy algorithm. In particular, they showed the following result.

\begin{lemma}[\cite{NO08:constant}]\label{lemma:onak_oracle}
	There exist $q=q(\varepsilon,d)$, an oracle $\Oracle_M$ to a maximal matching $M$, and an algorithm that queries $\Oracle_M$ about all the edges incident to a set of $s=O(1/\varepsilon^2)$ randomly sampled vertices and with probability at least $2/3$, returns an estimator that is $(1,\varepsilon n)$-approximation of the size of $M$, and each query to $\Oracle_M$ performs at most $q$ queries to the adjacency list of the graph.   
\end{lemma}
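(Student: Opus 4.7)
The plan is to implement $\Oracle_M$ by locally simulating the greedy maximal matching algorithm under a random edge priority, and then to estimate $|M|$ by random vertex sampling combined with a Hoeffding bound.

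First, I would assign each edge $e$ an independent uniformly random priority $\pi(e)\in[0,1]$. Let $M$ be the (unique, once $\pi$ is fixed) maximal matching produced by scanning edges in increasing order of $\pi$ and adding $e$ iff no edge currently in $M$ shares an endpoint with $e$. This produces the recursive characterization: $e\in M$ iff every adjacent edge $e'$ with $\pi(e')<\pi(e)$ satisfies $e'\notin M$. The oracle $\Oracle_M(e)$ is defined accordingly: enumerate the at most $2(d-1)$ edges adjacent to $e$ via $O(d)$ adjacency-list queries, recurse on each such edge whose priority is smaller than $\pi(e)$, and answer ``yes'' iff all recursive answers are ``no''.

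Second, I would bound the expected size of the recursion tree rooted at any edge $e$. Following the Nguyen--Onak argument, a given adjacent edge $e'$ enters the recursion only when $\pi(e')<\pi(e)$, which (over the randomness of $\pi(e')$) happens with probability $\pi(e)$; thus the expected number of explored nodes satisfies a contracting recurrence whose solution, after integrating over $\pi(e)\sim U[0,1]$, is an absolute expectation bound $q_0=q_0(d)$ depending only on $d$. To turn this into the worst-case bound demanded by the lemma, I would truncate the recursion after $q:=6\,q_0(d)/\varepsilon$ adjacency queries, declaring the queried edge ``not in $M$'' whenever truncation fires. By Markov's inequality, at any individual edge the truncation event has probability at most $\varepsilon/6$, so the oracle's effective matching $\tilde M$ differs from $M$ on an expected $O(\varepsilon n)$ vertices.

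Third, I would estimate $|M|$ by sampling. Draw $s=\Theta(1/\varepsilon^2)$ vertices $v_1,\dots,v_s$ uniformly at random, and for each $v_i$ query the truncated $\Oracle_M$ on each of its at most $d$ incident edges, setting $Y_i=1$ if some incident edge is reported in $M$ and $Y_i=0$ otherwise. Output $\hat X=\tfrac{n}{2s}\sum_i Y_i$. Since each edge of $M$ contributes exactly two matched vertices, the true number of matched vertices is $2|M|$, so $\E[Y_i]=2|M|/n$ in the untruncated model. A Hoeffding bound gives $|\hat X-|M||\le \varepsilon n/2$ with probability at least $5/6$ for an appropriate constant in $s$; combining this with the at most $\varepsilon n/2$ bias introduced by truncation (via another Markov step) and a union bound yields the claimed additive $(1,\varepsilon n)$-approximation of $|M|$ with probability at least $2/3$.

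The main obstacle is the constant-in-$n$ expected recursion tree bound $q_0(d)$: a priori the recursion could propagate arbitrarily far along the graph, and the non-trivial step is showing that the uniform-in-$[0,1]$ priorities force the exploration to contract in expectation independent of $n$. This is exactly the core probabilistic argument of Nguyen and Onak, which we invoke rather than reprove; the sampling and truncation steps around it are standard.
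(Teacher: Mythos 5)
Your proposal is correct and matches the paper's treatment of this statement: the paper gives no proof of the lemma, importing it directly from \cite{NO08:constant}, and your sketch is exactly the standard Nguyen--Onak argument (random edge priorities, local simulation of the greedy matching, Markov-based truncation to obtain a worst-case per-query bound $q(\varepsilon,d)$, then vertex sampling with a Hoeffding bound), with the one genuinely nontrivial ingredient---the $n$-independent bound on the expected recursion-tree size---likewise invoked from that paper rather than reproved. The only trivial adjustment is to shift your two-sided, $\varepsilon n/2$-accurate estimator down so that it meets the paper's one-sided definition of a $(1,\varepsilon n)$-approximation, which requires $x\le |M|$.
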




A key observation is that the algorithm in Lemma~\ref{lemma:onak_oracle} can be viewed as first sampling $s$ $q$-discs from the graph and then perform $\Oracle_M$ queries on each of these $q$-discs. It is easy to see that with high probability $0.99$, all these $q$-discs are disjoint. Furthermore, the answer of the above oracle only depends on the structure of the corresponding neighborhood of the starting vertex $v$ and the random ordering of the edges belonging to this neighborhood. 
Now we can approximate the size of a maximal matching in the random order streaming model as follows: we first invoke Algorithm~\ref{alg:approx_k_disc} to get an estimator for the distribution of
$q$-discs. Then we can simulate the oracle on this distribution.

\junk{
	For example, it is easy to see that for problems with a lower bound of $\Omega(n)$ on the solution size we immediately obtain a $(1+\epsilon)$-approximation. An example is the maximum independent set problem for bounded degree minor-free graphs. The $(1,\varepsilon n)$-approximation algorithm follows immediately from a constant time approximation algorithm in
	\cite{HKNO09:local}. Together with the obvious lower bound of $\Omega(n/d)$ for the size of maximum independent set, this immediately implies a $(1+\epsilon)$-approximation in random order streams.
	
	Another problem that also admits a $(1+\epsilon)$-approximation is the maximum matching problem. Here the observation
	is that in $d$-bounded graphs one has a lower bound that is linear in the number of non-isolated vertices.
	We can modify our algorithm to approximate the distribution of $k$-discs of the non-isolated vertices as follows.
	We simply make sure to take a sufficiently large sample that hits enough non-isolated vertices.
	As usual in a streaming setting, the sample can be stored implicitly using hashing. We only need to maintain 
	information about the sample vertices that have at least one incident edge. We start our streaming with $\log n$ 
	guesses for different sample sizes and drop the guesses for which the number of non-isolated sample vertices becomes too big. 
	This allows us to estimate the distribution of $k$-discs of the non-isolated vertices. Using an $F_0$-sketch
	to count the number of non-isolated vertices the matching result follows from the additive approximation algorithm
	from the previous section.
}

\section*{Acknowledgment}
We would like to thank G. Cormode, H. Jowhari for helpful discussions.

\bibliographystyle{alpha}
\bibliography{random_stream}





\end{document}